\newcommand*{\acctextsc}[1]{%
  \BeginAccSupp{%
    ActualText=\detokenize{#1},%
    method=escape,
  }%
  \textsc{\lowercase{#1}}%
  \EndAccSupp{}%
}
\newcommand{\fg}{\mathfrak{g}}
\newcommand{\ttr}{\mathtt{r}}
\newcommand{\tte}{\mathtt{e}}
\newcommand{\ttd}{\mathtt{d}}
\newcommand{\mf}{\mathfrak}
\newtheorem{lemma}{Lemma}
\newtheorem{theorem}{Theorem}
\theoremstyle{definition}
\newtheorem{definition}{Definition}
\newcommand\email[1]{\href{mailto:#1}{\nolinkurl{#1}}}
\newcommand\ZZ{\mathbb Z}
\newcommand\CC{\mathbb C}
\renewcommand\delta\deltaup
\renewcommand\varepsilon\varepsilonup
\title{Homotopy representations of extended holomorphic symmetry in holomorphic twists}
\author{D. Simon H. Jonsson\orcidlink{0009-0001-7155-8496}\footnotemark[1]\\\email{d.jonsson@herts.ac.uk}\\Hyungrok Kim\orcidlink{0000-0001-7909-4510}\footnotemark[1]\\\email{h.kim2@herts.ac.uk}\\Charles A. S. Young\orcidlink{0000-0002-7490-1122}\footnote{Department of Mathematics and Theoretical Physics, University of Hertfordshire, Hatfield, Hertfordshire\ \textsc{al10 9ab}, United Kingdom}\\\email{c.young8@herts.ac.uk}}
\begin{document}
\maketitle
\begin{abstract}
We argue that holomorphic twists of supersymmetric field theories naturally come with a symmetry \(L_\infty\)-algebra that nontrivially extends holomorphic symmetry. This symmetry acts on spacetime fields only up to homotopy, and the extension is only visible at the level of higher components of the action. We explicitly compute this for the holomorphic twist of ten-dimensional supersymmetric Yang--Mills theory, which produces a nontrivial action of a higher \(L_\infty\)-algebra on (a graded version) of five-dimensional affine space.
\end{abstract}
\tableofcontents

\section{Introduction and summary}
Twisting of physical theories \cite{Witten:1988ze,2011arXiv1111.4234C,Elliott:2018cbx,elliott2020taxonomy} has attracted great interest in the physics literature.
In particular, the pure spinor formalism \cite{Howe:1991bx,Berkovits:2002uc,Eager_2022,Elliott_2023,jonssonthesis,cederwall2024canonical} (see reviews in \cite{Cederwall:2013vba,Cederwall:2022fwu}), which naturally describes such theories as supergravity \cite{Cederwall:2009ez,Cederwall:2010tn}, supersymmetric Yang--Mills theory \cite{Berkovits:2001rb,Cederwall:2001bt} and \acctextsc{M}2-brane models \cite{Cederwall:2008vd,Cederwall:2008xu,Cederwall:2009ay},
accommodates twisting naturally \cite{saberi2021twisting,Hahner:2023kts}.

Physical theories come with representations of spacetime symmetry algebras, such as (super-)Poincaré algebras and (super-)conformal algebras. It has been long known that for theories with more than four supercharges it is often difficult to manifest this symmetry `off shell', that is, without using equations of motion. The pure spinor formalism provides a means of producing off-shell supermultiplets by introducing appropriate infinite towers of auxiliary fields and furthermore shows that the on-shell supermultiplets in fact carry a \emph{homotopy representation} of the spacetime symmetries; the higher components of the action then correspond to the equations of motion needed to make the symmetry algebra close.

In this paper, we argue that holomorphic twists of supersymmetric field theories naturally come with more than just the holomorphic symmetry but rather a certain \(L_\infty\)-extension of holomorphic symmetry. The extension is not visible at the level of strict representations, but spacetime fields naturally form a homotopy representation of this extended symmetry. We shall treat in detail the example of the holomorphic twist of ten-dimensional supersymmetric Yang--Mills theory. 
This twisted theory is holomorphic Chern--Simons theory on $\CC^5$ \cite{Baulieu:2010ch,elliott2020taxonomy,saberi2021twisting}, which enjoys a manifest $\mf{isl}(5) = \mf{sl}(5) \ltimes \mathbf 5$ symmetry. 
As we shall see, it naturally comes with the extended holomorphic symmetry \(L_\infty\)-algebra
\begin{equation}\label{eq:tildeisl}\operatorname{\widetilde{\mf{isl}}}(5)\coloneqq    \left(\mathfrak{sl}(5)\ltimes\mathbf{10}\xrightarrow00\xrightarrow0\mathbf{5}\right)
\end{equation}
equipped with a certain higher bracket \(\mu_4\); and this \(L_\infty\)-algebra acts on $\CC[z^1,z^2,z^3,z^4,z^5]$ (with appropriate grading) in the $L_\infty$-algebraic sense.
This may be seen as a non-strict \(L_\infty\)-algebra action of \(\operatorname{\widetilde{\mf{isl}}}(5)\) on the (graded version of) five-dimensional complex affine space \(\mathbb A^5\).

We work with minimal models (of both the symmetry algebra and the field content), which canonically separates the physical information and makes clear the presence of higher-order structures (\(L_\infty\)-algebras and their representations), rather than a larger strict model, which is not canonical and mixes in the physical degrees of freedom together with the unphysical auxiliary fields; this ensures that all information that we recover is physical and independent of the choice of auxiliary fields.

One way to think about this is to recall that twisting is akin to dimensional reduction \cite{pirsa_PIRSA:23070028} in which, rather than eliminating dependence on bosonic coordinates, we eliminate dependence on fermionic coordinates (restrict to \(Q\)-closed fields for a supersymmetry \(Q\)), which results in the `pair annihilation' of bosonic and fermionic coordinates. From this perspective, we have an `as above, so below' heuristic: the actions of twisted theories resemble those of their twistings, just like dimensional reduction preserves the forms of actions.
Using the pure spinor formalism, ten-dimensional supersymmetric Yang--Mills theory may be formulated as a holomorphic Chern--Simons theory on a complex (21|16)-dimensional pure spinor superspace (with 10 complexified ordinary spacetime coordinates, 16 ordinary superspace fermonic coordinates, and 11 bosonic pure spinor coordinates). The twisted theory has the same form of a Chern--Simons theory, but this time on 5|0 dimensions, where we have killed 16|16 coordinates. Under this `dimensional reduction', the ten-dimensional \(\mathcal N=1\) super-Poincaré symmetry, which is (55|16)-dimensional, reduces to a (39|0)-dimensional extended holomorphic symmetry. This dimensional reduction corresponds to twisting the supersymmetry algebra and taking the minimal models of the symmetry algebra and its homotopy representation on the field content. The additional factor \(\mathbf{10}\) in \eqref{eq:tildeisl} and the concomitant \(\mu_4\) are the `dimensionally reduced' remnants of ten-dimensional super-Poincaré symmetry.

The discussion of the present paper is limited to the kinematics, that is, ignoring interactions and considering the linearized theory. This is not an essential restriction. A discussion of the interaction terms should make use of the \(L_\infty\)-algebra formalism \cite{Jurco:2018sby,Macrelli:2019afx,Jurco:2019yfd} for scattering amplitudes;
after colour-stripping, we should get a \(C_\infty\)-algebra \cite{Borsten:2021hua}, on which the extended holomorphic symmetry should act, forming an example of an open--closed homotopy algebra \cite{Kajiura:2004xu,Kajiura:2005sn,Kajiura:2006mt}. This, however, we leave to a future work.

While we focus on ten-dimensional supersymmetric Yang--Mills theory as a special case, the discussion is generic and applies, in principle, to the twists of any supersymmetric field theory. However, the twists in other dimensions often produce either a strict representation (with the \(\mathfrak{\widetilde{isl}}(d)\)-representation factoring through an \(\mathfrak{isl}(d)\)-representation)
or a higher representation of an \(L_\infty\)-superalgebra on affine superspace (with odd coordinates); \(\mathbb A^5\) is one of the few nontrivial purely bosonic examples that carry a higher symmetry. (For more discussion, see \cref{sec:higher_d}.)

All of our discussion is classical; there may be obstructions to quantization in the form of anomalies. For our main example of the holomorphic twist of ten-dimensional supersymmetric Yang--Mills theory, the twist (five-dimensional holomorphic Chern--Simons theory) is known to have anomalies unless it is coupled in a consistent fashion to Kodaira--Spencer gravity \cite{costello2015quantization,costello2020anomaly}.

Local operators in a holomorphic theory are expected to form higher analogues of vertex algebras \cite{Beem:2013sza,Saberi:2019fkq,Bomans:2023mkd,Gaiotto:2024gii}. Although the additional \(L^{\wedge2}\) symmetry that we find does not seem to be part of a higher Virasoro algebra (since it is not part of holomorphic symmetries), it may arise as modes of some local operator, in which case it will be part of a higher vertex algebra, and the \(\mu_4\) that we find may be part of the higher brackets of the higher vertex algebra.

\subsection{Organization of this paper}
This paper is organized as follows. \Cref{sec:background} reviews the generalities of twisting \(L_\infty\)-algebras and modules over them and the appearance of higher components of the spacetime symmetry algebras and higher components of nonstrict representations of \(L_\infty\)-algebras, both in the untwisted and twisted cases. \Cref{sec:10d} then computes the higher components of the representation of supersymmetry for ten-dimensional supersymmetric Yang--Mills theory, the higher products of the corresponding twisted extended holomorphic algebra, and the higher components of its representation on the twisted supermultiplet. \Cref{sec:higher_d} briefly surveys phenomena that appear in dimensions other than ten.

In the body of the paper, we will usually refer to irreducible representations of \(\mathfrak{sl}(5)\) using their Dynkin labels, supplemented by Young tableaux where they are helpful.

\section{Mathematical background}\label{sec:background}
Here we briefly review the relevant concepts of twisting of \(L_\infty\)-algebras and their modules. For more detailed reviews, see \cite{Loday:2012aa,dotsenko2019twisting,kraft2022introduction,Dotsenko_2023}.

\subsection{\texorpdfstring{\(L_\infty\)}{L∞}-algebras}
An \(L_\infty\)-algebra is a homotopy generalization of the concept of a Lie algebra.
\begin{definition}
An \(L_\infty\)-algebra \((\mathfrak g,\{\mu_k\}_{k\geq 1})\) consists of a graded vector space \(\mathfrak g=\bigoplus_{i\in\mathbb Z}\mathfrak g^i\) together with skew-symmetric, multilinear maps \(\mu_k\colon\mathfrak g^{\wedge k}\to\mathfrak g\) of degree \(2-k\) for \(k\in\{1,2,3,\dotsc\}\) that satisfy the identity
\begin{equation}\label{eq:homotopy_Jacobi_identities}
    0=\sum_{\substack{i+j=n\\\sigma\in\operatorname{Sh}(i,j)}}(-1)^j\chi(\sigma,x)\mu_{j+1}(\mu_i(x_{\sigma(1)},\dotsc,x_{\sigma(i)}),\dotsc,x_{\sigma(i+j)})=0.
\end{equation}
In the above, \(\operatorname{Sh}(j_1,\dotsc,j_k)\) denotes the collection of \emph{shuffles}, which are permutations \(\sigma\) of \(\{1,\dotsc,j_1+\dotsb+j_k\}\) such that \(\sigma(1)<\dotsb<\sigma(j_1)\) and \(\sigma(j_1+1)<\dotsb<\sigma(j_1+j_2)\) and so on up to \(\sigma(j_1+\dotsb+j_{k-1}+1)<\dotsb<\sigma(j_1+\dotsb+j_k)\).
The symbol \(\chi(\sigma,x)\) denotes the skew-symmetric Koszul sign
\begin{equation}
    x_1\wedge\dotsb\wedge x_{k}
    =\chi(\sigma,x) x_{\sigma(1)}\wedge\dotsb\wedge x_{\sigma(k)},
  \end{equation}
  defined for homogeneous elements $x_1,\ldots,x_k\in\mf g$ inside the exterior algebra $\bigwedge^\bullet\mf g$.
\end{definition}
In what follows, we will often leave the products $\{\mu_k^{\mathfrak g}\}_{k\geq 1}$ implicit, and simply refer to an $L_\infty$-algebra through its underlying graded vector space. The identities \eqref{eq:homotopy_Jacobi_identities} imply that $\mu_1\circ\mu_1=0$ so that $\mf{g}$ is in particular a cochain complex.

\begin{definition}
  A morphism of $L_\infty$-algebras $\phi\colon(\fg,\{\mu_k^{\mathfrak g}\}_{k\geq 1})\rightsquigarrow(\mf{h},\{\mu_k^{\mathfrak h}\}_{k\geq 1})$ consists of skew-symmetric, multilinear component maps
  \begin{equation}
    \label{eq:inftymorphism}
    \phi^{(n)}\colon\fg^{\wedge n}\to \mf h
  \end{equation}
of degree $1-n$ for $n\in \{1,2,\dots\}$, satisfying the following coherence relations:
  \begin{multline}
  \label{eq:coherencemorphisms}
      \sum_{\mathclap{\substack{j\in\{1,\dotsc,i\}\\k_1+\dotsb+k_j=i\\\sigma\in\operatorname{Sh}(k_1,\dotsc,k_j)}}}
      \frac{\zeta(\sigma,k,x)}{j!}
      \mu_j^{\mathfrak h}\big(\phi^{(k_1)}( x_{\sigma(1)}, \dotsc, x_{\sigma(k_1)}),
      \dotsc, \phi^{(k_j)}( x_{\sigma(k_1+\dotsb+k_{j-1}+1)}, \dotsc,  x_{\sigma(i)})\big)\\
      =
      \sum_{\mathclap{\substack{j+k=i\\\sigma\in\operatorname{Sh}(j,k)}}}(-1)^k \chi(\sigma,x) \phi^{(k+1)}\big(\mu_j^{\mathfrak g}( x_{\sigma(1)}, \dotsc,  x_{\sigma(j)}),  x_{\sigma(j+1)}, \dotsc,  x_{\sigma(i)}\big),
\end{multline}
where
\begin{equation}
    \zeta(\sigma,k,x)\coloneqq\chi(\sigma,x)(-1)^{\sum_{1 \le m<n \le j} k_m k_n+\sum_{m=1}^{j-1} k_m(j-m)+\sum_{m=2}^j\left(1-k_m\right)\sum_{k=1}^{k_1+\dotsb+k_{m-1}}|x_{\sigma(k)}|} .
  \end{equation}
We shall sometimes omit the $\mu_k$ and just write $\fg\rightsquigarrow\mf h$. 
$L_\infty$-morphisms compose associatively, so that one has the category whose objects are $L_\infty$-algebras and whose morphisms are $L_\infty$-morphisms between them.

An $L_\infty$-morphism is an \emph{ $L_\infty$-(quasi-)isomorphism} if the first component map is a (quasi-)isomorphism of the underlying cochain complexes.
\end{definition}
\paragraph{Homotopy transfer of $L_\infty$-algebras.} 
\(L_\infty\)-algebras admit a good homotopy theory in the sense that minimal models exist and can be computed by homotopy transfer using a strong deformation retract. Let us sketch how this works. Concretely, given an $L_\infty$-algebra $(\mf g,\{\mu_k^{\mathfrak g}\}_{k\geq 1})$ one can always choose a \emph{strong deformation retract}, denoted by a triple $(i,p,h)$, from the underlying cochain complex $(\mf g,\mu_1)$ to its cohomology $\operatorname H(\mf g)$:
\begin{equation}
\begin{tikzcd}
    (\mathfrak g,\mu_1) \rar[shift left, "p"] \ar[loop left, "h"] & (\operatorname H(\mathfrak g),0) \lar[shift left, "i"]
\end{tikzcd}
\label{sdr}
\end{equation}
(i.e.~\(pi=\operatorname{id}_{\operatorname H(\mf g)}\) and \(ip=\operatorname{id}_{\mf g}-[d,h]\)).
Then there exists an \(L_\infty\)-algebra structure on the cohomology \(\operatorname H(\mathfrak g)\) together with an \(L_\infty\)-quasi-isomorphism
\begin{equation}\label{map to min}
    e\colon\operatorname H(\mathfrak g)\rightsquigarrow\mathfrak g,
\end{equation}
whose first component is \(e^{(1)}=i\); furthermore, there exist explicit formulae for the \(L_\infty\)-algebra structure of \(\operatorname H(\mathfrak g)\) and the quasi-isomorphism \(e\) in terms of \((i,p,h)\) \cite{Loday:2012aa}, e.g.~using the tensor trick \cite{Berglund_2014}, which can be interpreted as a sum over Feynman diagrams \cite{Macrelli:2019afx,Saemann:2020oyz}. For example the ternary bracket $\mu_3^{\operatorname H(\mf g)}$, is (modulo relative signs) the sum

\begin{equation}
  \begin{tikzpicture}[scale=0.5,baseline={([yshift=-1ex]current bounding box.center)}]
    \draw [thick] (-2,0) -- (-0.6,-1.4);
    \draw [thick] (-0.35,-1.65) -- (0,-2);
    \draw [thick] (2,0) -- (0,-2);
    \draw [thick] (0,-2) -- (0,-3);
    \draw [thick] (-1,-1)--(0,0);
    \node at (-.5,-1.5) {\tiny{$h$}};
    \node at (0,-3.5) {\tiny\(p\)};
    \node at (0,0.5) {\tiny\(i\)};
    \node at (-2,0.5) {\tiny\(i\)};
    \node at (2,0.5) {\tiny\(i\)};
    
     \fill[gray!50] (-1,-1) circle (0.35cm);
    \draw [thick] (-1,-1) circle (0.35cm);
    \node at (-0.98,-1) {\tiny$\mu_2$};
    \fill[gray!50] (0,-2) circle (0.35cm);
    \draw [thick] (0,-2) circle (0.35cm);
    \node at (0.02,-2) {\tiny$\mu_2$};
  \end{tikzpicture}
  +
  \begin{tikzpicture}[scale=0.5,baseline={([yshift=-1ex]current bounding box.center)}]
    \draw [thick] (-2,0) -- (0,-2);
    \draw [thick] (2,0) -- (0.65,-1.35);
    \draw [thick] (0,-2) -- (0.4,-1.6);
    \draw [thick] (0,-2) -- (0,-3);
    \draw [thick] (1,-1)--(0,0);
    \node at (0.55,-1.45) {\tiny{$h$}};
     \fill[gray!50] (1,-1) circle (0.35cm);
    \draw [thick] (1,-1) circle (0.35cm);
    \node at (1.02,-1) {\tiny$\mu_2$};
    \fill[gray!50] (0,-2) circle (0.35cm);
    \draw [thick] (0,-2) circle (0.35cm);
    \node at (0.02,-2) {\tiny$\mu_2$};\node at (0,-3.5) {\tiny\(p\)};
    \node at (0,0.5) {\tiny\(i\)};
    \node at (-2,0.5) {\tiny\(i\)};
    \node at (2,0.5) {\tiny\(i\)};
  \end{tikzpicture}
  +
  \begin{tikzpicture}[scale=0.5,baseline={([yshift=-1ex]current bounding box.center)}]
    \draw [thick] (-2,0) -- (-0.6,-1.4);
    \draw [thick] (-0.35,-1.65) -- (0,-2);
    
    \draw [thick] (1,-1) -- (0,-2);
    \draw [thick] (0,-3) -- (0,-2);
    \draw [thick] (1,-1)--(0,0);
    \draw [thick] (-1,-1) -- (0.3,-0.57);
    \draw [thick] (2,0) -- (0.7,-.45);
    \node at (-.5,-1.5) {\tiny{$h$}};
  
    \node at (0,-3.5) {\tiny\(p\)};
    \node at (0,0.5) {\tiny\(i\)};
    \node at (-2,0.5) {\tiny\(i\)};
    \node at (2,0.5) {\tiny\(i\)};

     \fill[gray!50] (-1,-1) circle (0.35cm);
    \draw [thick] (-1,-1) circle (0.35cm);
    \node at (-0.98,-1) {\tiny$\mu_2$};
    \fill[gray!50] (0,-2) circle (0.35cm);
    \draw [thick] (0,-2) circle (0.35cm);
    \node at (0.02,-2) {\tiny$\mu_2$};
  \end{tikzpicture}
  +
  \begin{tikzpicture}[scale=0.5,baseline={([yshift=-1ex]current bounding box.center)}]
    \draw [thick] (-2,0) -- (0,-2);
    \draw [thick] (2,0) -- (0,-2);
    \draw [thick] (0,0) -- (0,-2);
    \draw [thick] (0,-2) -- (0,-3);
    \node at (0,-3.5) {\tiny\(p\)};
    \node at (0,0.5) {\tiny\(i\)};
    \node at (-2,0.5) {\tiny\(i\)};
    \node at (2,0.5) {\tiny\(i\)};
    \fill[gray!50] (0,-2) circle (0.35cm);
    \draw [thick] (0,-2) circle (0.35cm);
    \node at (0.02,-2) {\tiny$\mu_3$};
  \end{tikzpicture}.
\end{equation}
More generally, $\mu_k^{\operatorname H(\mf g)}$ is computed by a sum\footnote{The explicit relative signs between the trees can be worked out by using the aforementioned tensor trick, for example.} over all rooted trees with $k$ leaves, where one decorates the leaves with $i$, the $n+1$-ary vertices with $\mu_n$, the internal edges with $h$, and the root with $p$.
The $L_\infty$-algebra structure on \(\operatorname H(\mf g)\) is called the \emph{minimal model} of \(\mathfrak g\); minimal models are unique up to $L_\infty$-isomorphisms.

\paragraph{Twisting \(L_\infty\)-algebras}
\(L_\infty\)-algebras also admit a notion of \emph{twist} with respect to a Maurer--Cartan element; for reviews, see \cite{dotsenko2019twisting,kraft2022introduction,Dotsenko_2023}. In the definitions below, for a $L_\infty$-algebra $(\mf g,\{\mu_k\}_{k\geq 1})$, we assume for simplicity that \(\mu_i=0\) for sufficiently large \(i\); this can be relaxed \cite{kraft2022introduction}.
\begin{definition}[\cite{Jurco:2018sby,Dotsenko_2023}]
Let \((\mathfrak g,\{\mu_k^{\mathfrak g}\}_{k\geq 1})\) be an \(L_\infty\)-algebra such that \(\mu_i=0\) for sufficiently large \(i\).
A \emph{Maurer--Cartan element} \(Q\in\mathfrak g^1\) of \(\mathfrak g\) is an element of degree \(1\) such that
\begin{equation}
    \sum_{i=1}^\infty\frac1{i!}\mu_i(Q,\dotsc,Q)=0.
\end{equation}
\end{definition}
\begin{definition}[\cite{Dotsenko_2023}]
Let \((\mathfrak g,\{\mu_k^{\mathfrak g}\}_{k\geq 1})\) be an \(L_\infty\)-algebra such that \(\mu_i=0\) for sufficiently large \(i\).
Let \(Q\in\mathfrak g^1\) be a Maurer--Cartan element.
The \emph{twist} of \(\mathfrak g\) with respect to \(Q\) is the \(L_\infty\)-algebra \(\mathfrak g_Q\) whose underlying graded vector space is that of \(\mathfrak g\) but whose brackets \(\mu_k^Q\) are
\begin{equation}\label{eq:twistedproducts}
\begin{aligned}
    \mu_k^Q&\colon&\mathfrak g_Q^{\wedge k}&\to\mathfrak g_Q\\
    &&(x_1,\ldots,x_k)&\mapsto\sum_{i\geq 0}\frac1{i!}\mu_{i+k}(Q,\ldots,Q,x_1,\dotsc,x_k).
\end{aligned}
\end{equation}
\end{definition}

\subsection{\texorpdfstring{\(L_\infty\)}{L∞}-representations}
The notion of a representation of (or module over) a Lie algebra generalizes to the setting of homotopy algebras as follows.
\begin{definition}[\cite{Lada:1994mn,lada,Jurco:2018sby}]
  \label{def:inftyrep}
    An \emph{$L_\infty$-representation}
of an \(L_\infty\)-algebra \((\mathfrak g,\{\mu_k^{\mathfrak g}\}_{k\geq 1})\) on a graded vector space \(M\) is an \(L_\infty\)-algebra structure \(\{\mu_k^{\mathfrak g\ltimes M}\}_{k\geq 1}\) on the direct sum \(\mathfrak g\oplus M\) 
such that
    \begin{equation}
        \mu_k^{\mathfrak g\ltimes M}(x_1\oplus0,\dotsc,x_k\oplus0)=\mu_k(x_1,\dotsc,x_k)
    \end{equation}
     and  \(\mu_{k+1}^{\mathfrak g\ltimes M}(x_1,\ldots x_k,m)\in 0\oplus M\) for \(x_1,\dotsc,x_k\in\mathfrak g\) and $m\in M$, and such that $\mu_{k}^{\mathfrak g\ltimes M}$ vanishes whenever at least two of its arguments belong to \(0\oplus M\subset\mathfrak g\oplus M\). We will write $\mf g \ltimes M$ to refer to $L_\infty$-algebras of this form.
We write
    \begin{equation}
        \rho^{(k)}(x_1,\dotsc,x_{k})\coloneqq\mu_{k+1}^{\mathfrak g\ltimes M}(x_1,\dotsc,x_{k},-)\colon M\to M.
      \end{equation}
      Observe that, in particular, \(\rho^{(k)}\) carries degree \(1-k\).
    Note that \(\rho^{(0)}\) defines a differential on \(M\), making it a cochain complex. We call an $L_\infty$-representation \emph{strict} whenever $\rho^{(k)}=0$ for $k>1$.
\end{definition}
The \(L_\infty\)-algebra homotopy Jacobi identities \eqref{eq:homotopy_Jacobi_identities} then can be written as a series of coherence relations amongst the \(\rho^{(k)}\)'s and $\mu_j^{\mf g}$'s.

As with \(L_\infty\)-algebras themselves, \(L_\infty\)-representations admit a good homotopy theory in that minimal models exist and homotopy transfer is possible. That is, given an \(L_\infty\)-algebra $\mathfrak g$ and a \(\mf g\)-representation \(M\),
we can always choose a strong deformation retract
\begin{equation}
  \label{eq:retractrep0}
    \begin{tikzcd}
        \ar[loop left,"{(h,h')}"] (\mathfrak g\oplus M,\mu_1+\rho^{(0)} )\rar[shift left, "{(p,p')}"] &  \lar[shift left, "{(i,i')}"]  (\operatorname H(\mathfrak g)\oplus\operatorname H(M),0)
\end{tikzcd}
\end{equation}
and perform homotopy transfer of $L_\infty$-algebra structures along this retract to obtain an $L_\infty$-algebra on $\operatorname H(\mf g)\oplus \operatorname H(M)$, which then defines the \(L_\infty\)-representation of \(\operatorname H(\mathfrak g)\) on \(\operatorname H(M)\).\footnote{The induced brackets on $\operatorname H(\mf g)\oplus \operatorname H(M)$ automatically satisfy the conditions given in \cref{def:inftyrep}. Indeed, as there are no brackets in $\mf g \oplus M$ that reduce the number of factors of $M$, no such brackets can arise through composition.}\footnote{This is the minimal model for the two-coloured operad of pairs of \(L_\infty\)-algebras and their \(L_\infty\)-representations, rather than the minimal model for the uncoloured operad of \(L_\infty\)-representations over a fixed \(L_\infty\)-algebra \(\mathfrak g\).}

Given an \(L_\infty\)-algebra \(\mathfrak g\) and a $\mf g$-representation \(M\) with structure maps \(\rho^{(k)}\),
then it is clear by inspection that a Maurer--Cartan element \(Q\in\mathfrak g\) is also Maurer--Cartan element of \(\mathfrak g\ltimes M\) and that the twist \((\mathfrak g\ltimes M)_Q\) factorizes as \((\mathfrak g\ltimes M)_Q=\mathfrak g_Q \ltimes M_Q\) \cite{kraft2022introduction}, where
\(M_Q\) comes with the structure maps
\begin{equation}\label{eq:twistedmodule}
    \rho_Q^{(k)}(x_1,\dotsc,x_k) \coloneqq \sum_{i=0}^\infty\frac1{i!}\rho^{(i+k)}(Q,\dotsc,Q,x_1,\dotsc,x_k).
  \end{equation}

\section[Higher symmetry of twisted 10D SYM]{Higher symmetry of twisted ten-dimensional supersymmetric Yang--Mills theory}\label{sec:10d}
In the Batalin--Vilkovisky formalism \cite{Batalin:1977pb,1981PhLB..102...27B,1983PhRvD..28.2567B,BATALIN1984106,Batalin:1985qj},
the field content of a perturbative gauge theory is a graded vector space \((\mathcal F,d_{\mathcal F})\) that comes equipped with a differential.
Field theories often respect symmetry algebras such as the super-Poincaré algebra, the (super-)conformal algebra, the (super-)(anti-)de~Sitter algebra, the (super-)Galilean algebra, etc. 
The action of such a symmetry algebra may be \emph{off shell} (i.e.~on \(\mathcal F\) itself) or merely \emph{on shell} (i.e.\ only on the space of solutions to the equations of motion\footnote{Since we ignore interactions, for us this is the linearized equations of motion, but in general one should consider the interacting case.}). The symmetry algebras are usually ungraded or \(\mathbb Z/2\mathbb Z\)-graded (i.e.~superalgebras), and correspondingly the field space is \(\mathbb Z/2\mathbb Z\)-graded into bosons and fermions (in addition to the \(\mathbb Z\)-grading corresponding to ghost number).
This \(\mathbb Z/2\mathbb Z\) grading may be often lifted to a \(\mathbb Z\) grading; correspondingly, the \(\mathbb Z/2\mathbb Z\) grading of the field space may also be lifted to a \(\mathbb Z\)-grading.\footnote{In order to lift the $\ZZ/2\ZZ$ grading of field space to \(\mathbb Z\), we work with polynomials over spacetime rather than smooth functions. This simplification can be avoided; see \cite{Eager_2022}.} 
The \(\mathbb Z\)-grading enables a good homotopy theory of \(L_\infty\)-algebras and \(L_\infty\)-representations, and in particular off-shell realizations of symmetries can be in most cases lifted to a non-strict \(L_\infty\)-representation of the corresponding symmetry \cite{Eager_2022}.

We turn now to our main example of interest, which is the twist of ten-dimensional super-Yang--Mills theory. (For simplicity and convenience with twisting, we assume all symmetries and fields to be complexified.)

We first discuss the twisted super-Poincaré algebra itself.
Let 
\begin{equation} 
V \cong \CC^{10} 
\end{equation} 
be a ten-dimensional complex vector space equipped with a nondegenerate symmetric bilinear form.
The ten-dimensional \(\mathcal N=(1,0)\) super-Poincaré superalgebra is the Lie superalgebra
\begin{equation}
    \mathfrak o(V)\ltimes(\Piup S_+\oplus V)
\end{equation}
where \(S_\pm\) are the two \(16\)-dimensional Weyl spinor representations of \(\mathfrak o(V)\) and \(\Piup\) denotes parity reversal.
The \(\mathbb Z/2\mathbb Z\) grading of the super-Poincaré superalgebra can be lifted to a \(\mathbb Z\)-grading as the graded Lie algebra
\begin{equation}
    \mathfrak p \coloneqq \mathfrak o(V)\ltimes(S_+[-1]\oplus V[-2])
\end{equation}
in which the elements are graded as twice the conformal dimension (i.e.~rotations in degree \(0\), supertranslations in degree \(1\), translations in degree \(2\)).
For convenience, we can pick a basis \(\ttr_{\mu\nu}\), \(\ttd_\alpha\), \(\tte_\mu\) of \(\mathfrak p^0\), \(\mathfrak p^1\), \(\mathfrak p^2\) respectively. Then the structure constants for \([\mathfrak p^1,\mathfrak p^1]\subset\mathfrak p^2\) are
\begin{equation}
    [\ttd_\alpha,\ttd_\beta]=2\gamma^\mu_{\alpha\beta}\tte_\mu,
\end{equation}
where \(\gamma^\mu_{\alpha\beta}\) are the chiral gamma (or Pauli) matrices in ten dimensions, i.e.~the branching for the \(\mathfrak o(V)\)-representation \(S_+\otimes S_+\to V\).

\subsection{The minimal model of the holomorphic twist algebra}
Suppose we pick a nonzero Maurer--Cartan element of \(\mathfrak p\), i.e.\ a nonzero \(Q=Q^\alpha\ttd_\alpha\in S_+\) such that \([Q,Q]=0\), that is,
\begin{equation}
    \gamma^\mu_{\alpha\beta}Q^\alpha Q^\beta=0.
\end{equation}
This picks out a subspace
\begin{equation}
    L = [Q,S_+]\subset V.
  \end{equation}
This subspace $L$ is a maximal isotropic subspace with respect to the bilinear form on \(V\).
Indeed, using the Fierz identity 
\begin{equation}    2\gamma^\mu_{\alpha(\beta}\gamma^{\phantom0}_{\mu|\gamma)\delta}=-\gamma^\mu_{\beta\gamma}\gamma^{\phantom0}_{\mu\alpha\delta},
\end{equation}
we have
\begin{equation}
    Q^\beta Q^\gamma\gamma^\mu_{\alpha\beta}\gamma^{\phantom0}_{\mu\gamma\delta}\propto(Q^\beta Q^\gamma\gamma^\mu_{\beta\gamma})\gamma^{\phantom0}_{\mu\alpha\delta}=0;
\end{equation}
given now any elements \(\psi,\chi\in S_+\), consider the elements \([Q,\psi],[Q,\chi]\in L\). We have
\begin{equation}
    [Q,\psi]^\mu[Q,\chi]_\mu =(Q^\beta Q^\gamma\gamma_{\alpha\beta}^\mu \gamma_{\mu \gamma\delta})\psi^\alpha\chi^\delta=0.
\end{equation}
Thus $L$ is indeed contained in its own orthogonal complement, i.e.\ it is an isotropic subspace.
Isotropy implies \(\dim L\le5\). Furthermore, the set of Maurer--Cartan elements of \(\mathfrak p\) consists of two \(\mathfrak o(V)\)-orbits, namely nonzero ones and \(\{0\}\); and it can be shown that, when \(Q\ne0\), then \(\dim L=5\) \cite{cdb3b74a-b986-3a9f-9841-aad15f0544ec}. That is, \(L\) is indeed a maximal isotropic subspace.

Thus, we have the short exact sequence of vector spaces
\begin{equation}\label{eq:decomposition_of_V}
    0\to L\to V\xrightarrow q L^*\to 0
\end{equation}
where the quotient \(q\) is via the composition \(V\xrightarrow\sim V^*\twoheadrightarrow L^*\) in which \(V\xrightarrow\sim V^*\) is given by the bilinear form on \(V\). Let us choose a splitting of \eqref{eq:decomposition_of_V} to write
\begin{equation}
    V=L\oplus L^*.
\end{equation}
This decomposition then fixes Lie subalgebras \(\mathfrak{sl}(L)\subset\mathfrak{gl}(L)\subset\mathfrak o(V)\), under which the ten-dimensional representation \(V\) canonically decomposes into \(\mathfrak{sl}(L)\) irreducible representations as
\begin{equation}
    V\cong_{\mf{sl}(L)} L\oplus L^*\cong_{\mf{sl}(L)} (1000)_{\mf{sl}(L)} \oplus (0001)_{\mf{sl}(L)}
\end{equation}
(Here and elsewhere we write $(ijkl)_{\mf{sl}(L)}$ for the irreducible representation with these $\mf{sl}(L)$ Dynkin labels, i.e.\ for the irreducible representation with highest weight $i\omega_1 + j \omega_2 + k \omega_3 + l\omega_4$ where $\omega_1,\dots,\omega_4$ are the fundamental weights.)

Similarly, the adjoint representation of \(\mf o(V)\) decomposes into irreducible \(\mf{sl}(L)\)-representations as
\begin{equation}
\begin{aligned}
    \mathfrak o(V) &\cong_{\mf{sl}(L)} \CC \oplus \mathfrak{sl}(L)\oplus L^{\wedge2}\oplus(L^*)^{\wedge2}\\
    &\cong_{\mf{sl}(L)} (0000)_{\mf{sl}(L)} \oplus (1001)_{\mf{sl}(L)} \oplus (0100)_{\mf{sl}(L)} \oplus (0010)_{\mf{sl}(L)},
\end{aligned}
\end{equation}
and the spinor representations \(S_\pm\) decompose as
\begin{align}
    S_+ &\cong_{\mf{sl}(L)} \bigoplus_{i=0}^2(L^*)^{\wedge(2i)}\cong_{\mf{sl}(L)}(0000)_{\mf{sl}(L)}\oplus(0010)_{\mf{sl}(L)}\oplus(1000)_{\mf{sl}(L)},\\
    S_- &\cong_{\mf{sl}(L)} \bigoplus_{i=0}^2(L^*)^{\wedge(2i+1)} \cong_{\mf{sl}(L)} S_+^*\cong(0001)_{\mf{sl}(L)}\oplus(0100)_{\mf{sl}(L)}\oplus(0000)_{\mf{sl}(L)}.
\end{align}
Our chosen element $Q\in S_+$ spans the one-dimensional $\mf{sl}(L)$-submodule $\CC\cong (L^*)^{\wedge0}$.
The twist of $\mf p$ by $Q$ is \cite[Prop.~3.3]{saberi2021twisting}
\begin{equation}
\mathfrak p_Q = 
\left(\begin{tikzcd}[row sep=0]
\underset{\mathclap{(0000)_{\mf{sl}(L)}}}{\CC} \rar["\operatorname{id}"] & \underset{(0000)_{\mf{sl}(L)}}{(L^*)^{\wedge0}[-1]} \\
\underset{(0010)_{\mf{sl}(L)}}{(L^*)^{\wedge2}} \rar["\operatorname{id}"] & \underset{(0010)_{\mf{sl}(L)}}{(L^*)^{\wedge2}[-1]} \\
& \underset{(1000)_{\mf{sl}(L)}}{(L^*)^{\wedge4}[-1]} \rar["\operatorname{id}"] & \underset{(1000)_{\mf{sl}(L)}}{L[-2]} \\
\underset{\mathclap{(1001)_{\mf{sl}(L)}}}{\mathfrak{sl}(L)}\,\,\ltimes\,\,\underset{\mathclap{(0100)_{\mf{sl}(L)}}}{\smash{L^{\wedge2}}\vphantom{\mathfrak{sl}(L)}}&&\underset{(0001)_{\mf{sl}(L)}}{L^*[-2]}
\end{tikzcd}\right).
\label{pQ}\end{equation}

We work in an explicit basis \((\ttr^i{}_j,\ttr^{ij},\ttr_{ij},\ttd,\ttd_{ij},\ttd^i,\tte^i,\tte_i)\) of \(\mathfrak p\) given in \cref{app:conventions}. In particular, the basis elements $\ttr^i{}_j$ span $\mf{gl}(L)$, and we shall write 
\begin{equation}\label{eq:define_sl(L)_basis}
    \tilde\ttr^i{}_j\coloneqq\ttr^i{}_j - \frac15\delta^i_j\ttr^k{}_k
\end{equation}
for the basis elements of $\mf{sl}(L)$. 
 
For a reason to become apparent in the theorem below, let us note that the representation $L^* \otimes (L^{\wedge 2})^{\wedge 3}$ of $\mf{sl}(L)$ decomposes into irreducibles as
\begin{equation}   
L^* \otimes (L^{\wedge 2})^{\wedge 3} \cong (0021)_{\mf{sl}(L)} \oplus (0110)_{\mf{sl}(L)} \oplus (1001)_{\mf{sl}(L)} \oplus (2010)_{\mf{sl}(L)} \oplus (2002)_{\mf{sl}(L)}. 
\end{equation}
In particular, the adjoint representation $\mf{sl}(L) \cong (1001)_{\mf{sl}(L)}$ occurs with multiplicity one. We shall write 
\begin{equation}  P_{L^*\otimes(L^{\wedge2})^{\wedge3}\to\mathfrak{sl}(L)} 
\end{equation}
for the projector onto this irreducible component. 
\begin{theorem}\label{thm:H(p_Q)}
The minimal model of the ten-dimensional twisted \(\mathbb Z\)-graded \(\mathcal N=(1,0)\) super-Poincaré algebra \(\mathfrak p_Q\) is the \(L_\infty\)-algebra whose underlying graded Lie algebra is
\begin{equation}
    \operatorname H(\mathfrak p_Q) = \mathfrak{sl}(L)\ltimes\left( L^{\wedge2}\oplus L^*[-2]\right),
\end{equation}
and whose higher brackets \(\mu_i\) \((i\ge3)\)
are all zero except for \(\mu_4\), whose only nonvanishing component is given by
\begin{equation}
\begin{aligned}
    \mu_4(\tte_i,\ttr^{jk},\ttr^{lm},\ttr^{np})
    &=-\delta_i^{[j}\varepsilon^{k]lmqr}\delta^{[n}_q\tilde\ttr^{p]}{}_r
    +\delta_i^{[j}\varepsilon^{k]npqr}\delta^{[l}_q\tilde\ttr^{m]}{}_r\\
    &\qquad-\delta_i^{[l}\varepsilon^{m]npqr}\delta^{[j}_q\tilde\ttr^{k]}{}_r
    +\delta_i^{[l}\varepsilon^{m]jkqr}\delta^{[n}_q\tilde\ttr^{p]}{}_r\\
    &\qquad-\delta_i^{[n}\varepsilon^{p]jkqr}\delta^{[l}_q\tilde\ttr^{m]}{}_r
    +\delta_i^{[n}\varepsilon^{p]lmqr}\delta^{[j}_q\tilde\ttr^{k]}{}_r\\
    &\eqqcolon P^{jklmnp;}_i{}^r_q\tilde\ttr^q{}_r,
\end{aligned}
\end{equation}
where \(P^{jklmnp;}_i{}^r_q\) is the projector
\begin{equation}\label{eq:mu4_projector}
    L^*\otimes(L^{\wedge2})^{\wedge3}\to\mathfrak{sl}(L),
\end{equation}
and where the skew-symmetrizations are unnormalized.
\end{theorem}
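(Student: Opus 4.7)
The plan is to compute the minimal model by homotopy transfer: I would fix an explicit strong deformation retract from $\mf{p}_Q$ onto its cohomology and then constrain the resulting tree sums for the higher brackets by $\mf{sl}(L)$-equivariance and degree counting, until only the stated $\mu_4$ remains. First, inspecting diagram~\eqref{pQ}, the differential $\mu_1=[Q,-]$ consists of identity maps on three acyclic pairs, namely $\CC\xrightarrow{\sim}(L^*)^{\wedge 0}$, $(L^*)^{\wedge 2}\xrightarrow{\sim}(L^*)^{\wedge 2}$, and $(L^*)^{\wedge 4}\xrightarrow{\sim}L$, which split off; what remains is $\mf{sl}(L)\oplus L^{\wedge 2}$ in degree $0$ and $L^*$ in degree $2$, matching the claimed underlying graded vector space of $\operatorname H(\mf{p}_Q)$. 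I would then take the canonical strong deformation retract in which $i$ and $p$ include and project out $\operatorname H(\mf{p}_Q)$, and $h$ inverts the isomorphism $\mu_1$ on each acyclic pair (and is zero elsewhere). Since $\mf{p}_Q$ is a strict graded Lie algebra, each transferred $\mu_k^{\operatorname H}$ with $k\geq 2$ is a signed sum over rooted binary trees with $k$ leaves labeled $i$, internal edges labeled $h$, binary vertices labeled $\mu_2$, and root labeled $p$.

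Next I would exploit degree counting: $\mu_k^{\operatorname H}$ has degree $2-k$ and $\operatorname H(\mf{p}_Q)$ sits only in (even) degrees $0$ and $2$, so all odd $\mu_k^{\operatorname H}$ vanish; in particular $\mu_3^{\operatorname H}=0$. For $\mu_2^{\operatorname H}$, only the single-vertex tree contributes, giving the restriction of the $\mf{p}_Q$-bracket, which produces the stated semidirect product because $[L^{\wedge 2},L^{\wedge 2}]=0$ already in $\mf o(V)$ and because $[L^{\wedge 2},L^*]\subset L$ is killed by $p$. For the component of $\mu_4^{\operatorname H}$ with one $L^*$ input and three $L^{\wedge 2}$ inputs, the same vanishing $[L^{\wedge 2},L^{\wedge 2}]=0$ kills every tree whose first vertex pairs two $L^{\wedge 2}$'s; the surviving trees begin by pairing $L^*$ with an $L^{\wedge 2}$ to produce an element of $L$, which $h$ transports to $(L^*)^{\wedge 4}$, then continue through the acyclic pair $((L^*)^{\wedge 2},(L^*)^{\wedge 2})$ before a final $\mu_2$ in $\mf o(V)$ closes at $\mf{sl}(L)$. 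The resulting $\mf{sl}(L)$-equivariant map $L^*\otimes(L^{\wedge 2})^{\wedge 3}\to\mf{sl}(L)\oplus L^{\wedge 2}$ must, by the representation decomposition stated just before the theorem, land only in the $(1001)_{\mf{sl}(L)}\cong\mf{sl}(L)$ summand (with multiplicity one), since $(0100)_{\mf{sl}(L)}\cong L^{\wedge 2}$ does not appear in $L^*\otimes(L^{\wedge 2})^{\wedge 3}$; hence this component of $\mu_4$ is forced to equal $P_{L^*\otimes(L^{\wedge 2})^{\wedge 3}\to\mf{sl}(L)}$ up to a single overall scalar.

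Finally, I would fix this overall scalar by carrying out the tree sum on one convenient representative input and projecting onto a single $\mf{sl}(L)$ matrix element, keeping careful track of the Koszul signs, the shuffle factorials, and the relative signs between trees dictated by the tensor trick. The remaining putative components of $\mu_4^{\operatorname H}$ (those involving an $\mf{sl}(L)$-input or two or more $L^*$-inputs) are ruled out by the same structural rules: $\mu_2(L^*,L^*)=0$ since translations commute, $\mu_2(L^*,\mf{sl}(L))\subset L^*$ lands in the cohomology where $h$ vanishes, and the surviving $\mf{sl}(L)$-equivariant tensor channels lack the required irreducibles. For $k\geq 6$, iterated use of $[L^{\wedge 2},L^{\wedge 2}]=0$ together with the fact that the only homotopy move on intermediate elements created by $L^{\wedge 2}$-brackets is $L\mapsto(L^*)^{\wedge 4}$ leaves no valid tree topology, so all higher $\mu_k^{\operatorname H}$ vanish. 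The main obstacle is the explicit sign and scalar computation for $\mu_4$: representation theory pins its form down up to a single scalar, but fixing that scalar requires patient bookkeeping across several trees in the homotopy transfer formula.
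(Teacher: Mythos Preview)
Your proposal is correct and follows essentially the same approach as the paper: set up the evident $\mathfrak{sl}(L)$-equivariant strong deformation retract, run the tree-sum homotopy transfer, and identify the unique surviving chain of intermediate states $\tte_i\xrightarrow{h[-,\ttr^{\bullet\bullet}]}\ttd^{\bullet}\xrightarrow{h[-,\ttr^{\bullet\bullet}]}\ttr_{\bullet\bullet}\xrightarrow{p[-,\ttr^{\bullet\bullet}]}\tilde\ttr^{\bullet}{}_{\bullet}$ that produces $\mu_4$. The only substantive difference is organizational: the paper exhaustively enumerates all intermediate states reachable from cohomology and observes that the chain terminates (since everything in degree $0$ is killed by $h$), whereas you argue tree-by-tree and invoke the multiplicity-one occurrence of $(1001)_{\mathfrak{sl}(L)}$ to reduce the explicit formula to a single scalar fixed by one sample computation; the paper instead computes $p[h[h[\tte_k,\ttr^{ij}],\ttr^{lm}],\ttr^{qr}]$ directly and then skew-symmetrizes. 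One small imprecision: your claim that ``the only homotopy move on intermediate elements created by $L^{\wedge2}$-brackets is $L\mapsto(L^*)^{\wedge4}$'' is not literally true (you yourself use the move $\ttd_{\bullet\bullet}\mapsto\ttr_{\bullet\bullet}$ in building $\mu_4$); the clean termination argument, as in the paper, is that once the chain reaches degree $0$ the homotopy $h$ vanishes, so no tree with more than four leaves survives.
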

\begin{proof}
From \cref{pQ}, we see that there is an evident \(\mathfrak{sl}(L)\)-equivariant strong deformation retract $(i,p,h)$ of cochain complexes 
\begin{equation}
  \label{eq:ogretract}
    \begin{tikzcd}
        \ar[loop left,"h"] (\mf p_Q,\mathrm{ad}_Q) \rar[shift left, "p"] &  \lar[shift left, "i"]  (\operatorname H(\mf p_Q),0)
\end{tikzcd}
\end{equation}
from $\mf p_Q$ to its cohomology
\begin{equation}
\operatorname H(\mathfrak p_Q) = (\mathfrak{sl}(L)\ltimes\textstyle L^{\wedge2}\xrightarrow 0 0\xrightarrow 0 L^*).
\end{equation}
(The remaining $\mf{sl}(L)$ irreducible representations present in \cref{pQ} participate in trivial pairs; one defines the homotopy $h$ to act as the inverse to the differential on these.)
The Lie algebra structure $\mu_2$ on this cohomology $\operatorname{H}(\mf p_Q)$ is given by restriction.

It remains to check what higher brackets \(\mu_i\) are induced by homotopy transfer. 
We are to sum over rooted binary trees in which each vertex corresponds to the binary bracket $\mu_2^{\mf p}(-,-) = [-,-]$ of $\mf p$ (and thus of $\mf p_Q$), each internal edge to the homotopy $h$, each leaf to $i$ and the root to $p$ \cite{Loday:2012aa}.

We will use Feynman-diagrammatic terminology, referring to elements as `states' (see \cite{Macrelli:2019afx,Saemann:2020oyz}). 
Recall our notation \((\ttr^i{}_j,\ttr^{ij},\ttr_{ij},\ttd,\ttd_{ij},\ttd^i,\tte^i,\tte_i)\) for the basis elements of \(\mathfrak p\) (whose underlying graded vector space we identify with that of \(\mathfrak p_Q\)) as given in \cref{app:conventions}. We will refer to a state as \emph{intermediate} if it lies in the image of $h$.

Using the strong deformation retract \((i,p,h)\), we try to construct the possible intermediate states, keeping track of the representations under \(\mathfrak{sl}(L)\). Using the embedding $i\colon\operatorname H(\mf{p}_Q)\hookrightarrow \mf{p}_Q$, we will identify the basis elements \(\tilde\ttr^i{}_j,\ttr^{ij},\tte_i\) in $\mf{p}_Q$ with those of $\operatorname H(\mf{p}_Q)$. The products $\mu_i$ for $i>2$ can then be computed in a top-down recursive fashion by starting with two elements $x, y\in \operatorname{H}(\mf{p}_Q)$, and then compute which intermediate states are allowed by considering
\begin{equation}
  \label{eq:intermediate}
  h[i(x),i(y)].
\end{equation}
The next intermediate states are then computed by plugging \eqref{eq:intermediate}, and one element $a\in i(\operatorname{H}(\mf{p}_Q))\oplus \Im(h[i,i])$, into $[-,-]$. If the result lies in $\operatorname H(\mf{p}_Q)$, we apply $p$, and we are done. If not, we apply $h$ to obtain new intermediate states and then continue the procedure of pairing (using $[-,-]$) the newly obtained intermediate states with each other or with previously obtained intermediate states or states in the cohomology.

Starting with two elements of $\operatorname H(\mf{p}_Q)$, applying \(h[-,-]\) can only yield the intermediate states
\begin{equation}\label{eq:interm1}
    h[\tte_k,\ttr^{ij}]=\delta^i_k\ttd^j-\delta^j_k\ttd^i.
\end{equation}
Using \(\ttd^i\) together with \(\tilde\ttr^i{}_j,\ttr^{ij},\tte_i\), we can only further create 
\begin{subequations}
\begin{align}
[\ttd^i,\ttd^j] &= 0, \\
[\ttd^i,\tte_j] &= 0, \\
h[\ttr^{ij},\ttd^k] &= -\frac12\varepsilonup^{ijklm}\ttr_{lm},\\
[\tilde\ttr^i{}_j,\ttd^k]&=-\frac15\delta^i_j\ttd^k+\delta^k_j\ttd^i.
\end{align}
\end{subequations}
Among these, \([\tilde{\ttr}^i{}_j,\ttd^k]\) does not belong to the cohomology, i.e.\ it is not $Q$-closed, nor can it produce a new intermediate state since \(h([\tilde{\ttr}^i{}_j,\ttd^k])=0\).
Thus the only intermediate state we can create is given by \(h[\ttr^{ij},\ttd^k]\propto\ttr_{lm}\). Applied to \eqref{eq:interm1}, we obtain
\begin{equation}\label{eq:interm2}
    h[h[\tte_k,\ttr^{ij}],\ttr^{lm}]=h[\delta^i_k\ttd^j-\delta^j_k\ttd^i,\ttr^{lm}]
    =\frac12\left(
    \delta^i_k\varepsilon^{jlmnp}
    -\delta^j_k
    \varepsilon^{ilmnp}\right)\ttr_{np}.
\end{equation}
Using the new intermediate state \(\ttr_{ij}\) together with the previously created intermediate state \(\ttd^i\) and the cohomology \((\tilde\ttr^i{}_j,\ttr^{ij},\tte_i)\), we can create the following new states:
\begin{subequations}
\begin{align}
[\ttr_{ij},\tte_k] &= 0, \\
[\ttr_{ij},\ttd^k] &= 0, \\
[\tilde\ttr^i{}_j,\ttr_{kl}]&=
-\delta^i_k\ttr_{jl}-\delta^i_l\ttr_{kj}
+\frac25\delta^i_j\ttr_{kl}, \\
[\ttr^{ij},\ttr_{kl}] &= \delta^i_k\ttr^j{}_l-\delta^j_k\ttr^i{}_l-\delta^i_l\ttr^j{}_k+\delta^j_l\ttr^i{}_k, \\
[\ttr_{ij},\ttr_{kl}] &= 0.
\end{align}
\end{subequations}
Now, all these states sit in degree $0$ and are thus killed by $h$, so that none of them can create further intermediate states. The only nontrivial thing we can now do is to project to the cohomology: \([\tilde\ttr^i{}_j,\ttr_{kl}]\) never lies in the cohomology, whereas the traceless part of \([\ttr^{ij},\ttr_{kl}]\) does. Applied to \eqref{eq:interm2}, we obtain
\begin{equation}
\begin{aligned}
p[h[h[\tte_k,\ttr^{ij}],\ttr^{lm}],\ttr^{qr}]
    &=-\frac12\left(\delta^j_k
    \varepsilon^{ilmnp}
    -
    \delta^i_k\varepsilon^{jlmnp}\right)[\ttr_{np},\ttr^{qr}]\\
    &=\left(
    \delta^j_k
    \varepsilon^{ilmnp}
    -\delta^i_k\varepsilon^{jlmnp}
    \right)
    \left(
        \delta^q_n\ttr^r{}_p
        -\delta^r_n\ttr^q{}_p
    \right)\\
    &=\delta^{[j}_k
    \varepsilon^{i]lmnp}\delta^{[q}_n\tilde\ttr^{r]}{}_p.
\end{aligned}
\end{equation}
After graded-skew-symmetrization among the three arguments of the form \(\ttr^{ij}\), this yields the only nonvanishing component of \(\mu_4\).

There are no other \(\mu_i\) since we have systematically constructed all possible nonzero tree Feynman diagrams (by constructing all possible intermediate states that occur in them).
\end{proof}

One may doubt whether the nonstrictness and existence of a 4-bracket in \(\operatorname H(\mathfrak p_Q)\) is model-independent (i.e.~holds for all minimal models) or an accidental feature of the specific minimal model in question. By the general theory of minimal models, $\operatorname H(\mf p_Q)$ is unique up to $L_\infty$-isomorphisms. Concretely, we may ask whether there exists a strict minimal model of \(\operatorname H(\mathfrak p_Q)\) (hence with no higher brackets). The answer is no.
\begin{lemma}
  Let $\mf h$ be a minimal strict graded $L_\infty$-algebra defined on the graded vector space $\operatorname H(\mf p_Q)\cong \mf{sl}(L)\oplus L^{\wedge 2}\oplus L^*$. There exists no $L_\infty$-isomorphism $\phi\colon\operatorname H(\mf p_Q)\rightsquigarrow \mf h$.
\end{lemma}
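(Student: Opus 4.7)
The plan is to assume such an $L_\infty$-isomorphism $\phi$ exists and derive a contradiction from the $i=4$ instance of the coherence relation \eqref{eq:coherencemorphisms}, specialised to the inputs $(\tte_i,\ttr^{jk},\ttr^{lm},\ttr^{np})$ on which, by \cref{thm:H(p_Q)}, $\mu_4$ is nonzero and takes values in the $\mf{sl}(L)$-summand.

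First I would carry out a parity bookkeeping. Since both $\operatorname H(\mf p_Q)$ and $\mf h$ are concentrated in the even degrees $\{0,2\}$ and $\phi^{(n)}$ has degree $1-n$, every $\phi^{(n)}$ with $n$ even must vanish; in particular $\phi^{(2)}=\phi^{(4)}=0$. The $i=2$ coherence relation then reduces to the statement that $\phi^{(1)}\colon\operatorname H(\mf p_Q)\to\mf h$ is an isomorphism of graded Lie algebras. The same parity count forces $\phi^{(3)}(\ttr^{jk},\ttr^{lm},\ttr^{np})=0$, since its output would live in the absent degree $-2$.

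Next, I would note that $\mu_2^{\operatorname H(\mf p_Q)}$ vanishes on every pair drawn from $\{\tte_i,\ttr^{jk},\ttr^{lm},\ttr^{np}\}$: in $\operatorname H(\mf p_Q)=\mf{sl}(L)\ltimes(L^{\wedge 2}\oplus L^*[-2])$, one has $[L^{\wedge 2},L^{\wedge 2}]=0$ (because $L^{\wedge 2}$ sits abelianly inside $\mf o(V)$) and $[L^*,L^{\wedge 2}]=0$ (the corresponding bracket in $\mf p_Q$ lands in $L$, which is killed in cohomology). Combining these vanishings with strictness of $\mf h$ and $\mu_1^{\operatorname H(\mf p_Q)}=\mu_3^{\operatorname H(\mf p_Q)}=0$ leaves only a handful of terms in \eqref{eq:coherencemorphisms}. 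After transporting the resulting identity through $(\phi^{(1)})^{-1}$ and setting $\psi\coloneqq(\phi^{(1)})^{-1}\circ\phi^{(3)}$, it takes the form
\begin{equation}
\mu_4(\tte_i,\ttr^{jk},\ttr^{lm},\ttr^{np})=\sum_{\alpha\in\{2,3,4\}}\pm\bigl[\ttr^{(\alpha)},\psi(\tte_i,\ttr^{(\beta)},\ttr^{(\gamma)})\bigr]_{\operatorname H(\mf p_Q)},
\end{equation}
where the sum runs over the three choices of which $L^{\wedge 2}$-input plays the outer role in the Lie bracket.

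The contradiction then comes from a component analysis. The value $\psi(\tte_i,\ttr,\ttr)$ lies in $\operatorname H(\mf p_Q)^0=\mf{sl}(L)\oplus L^{\wedge 2}$, and since $[L^{\wedge 2},\mf{sl}(L)]\subseteq L^{\wedge 2}$ (the $\mf{sl}(L)$-action on $L^{\wedge 2}$) while $[L^{\wedge 2},L^{\wedge 2}]=0$, every term on the right-hand side lies entirely in the $L^{\wedge 2}$ summand. By \cref{thm:H(p_Q)}, however, the left-hand side is a nonzero element of the $\mf{sl}(L)$ summand, and $\mf{sl}(L)\cap L^{\wedge 2}=0$. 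The main bookkeeping obstacle I anticipate is systematically enumerating the surviving terms of \eqref{eq:coherencemorphisms} with their Koszul signs; however, because the decisive obstruction is a clash between the semisimple factor and the abelian ideal of $\operatorname H(\mf p_Q)^0$, the precise signs are inessential for the conclusion.
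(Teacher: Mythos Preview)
Your proposal is correct and follows essentially the same approach as the paper: both use the $i=4$ coherence relation on the inputs $(\tte_i,\ttr^{jk},\ttr^{lm},\ttr^{np})$ after killing the even $\phi^{(k)}$ by parity and identifying $\phi^{(1)}$ as a Lie-algebra isomorphism, then contrast the $\mathfrak{sl}(L)$-valued left-hand side with a right-hand side that cannot hit $\mathfrak{sl}(L)$. Your explicit component analysis (showing each surviving term $[\ttr^{(\alpha)},\psi(\tte_i,\ttr^{(\beta)},\ttr^{(\gamma)})]$ lands in $L^{\wedge2}$) is a more detailed version of the paper's appeal to $\mathfrak{sl}(L)$-equivariance of $\mu_2^{\mathfrak h}$, and you are also more careful in noting that $\phi^{(3)}(\ttr,\ttr,\ttr)=0$ by degree and that the $\phi^{(3)}(\mu_2(-,-),-,-)$ terms vanish on these inputs, points the paper leaves implicit.
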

\begin{proof}
    Suppose to the contrary that such an \(L_\infty\)-isomorphism $\phi\colon\operatorname H(\mf p_Q)\rightsquigarrow \mf h$ exists with component maps $\phi^{(k)}$. Without loss of generality, we may identify the underlying graded vector spaces of \(\operatorname H(\mf p_Q)\) and \(\mf h\) via $\phi^{(1)}$. Since $\operatorname H(\mf p_Q)$ (and hence $\mf h$) are concentrated in even degrees, even-order components of \(\phi\) (which have odd degree) vanish: $\phi^{(2k)}=0$. The coherence relations \cref{eq:coherencemorphisms} then implies that $\phi^{(1)}$ is a Lie-algebra isomorphism of the underlying graded Lie algebra structures on $\operatorname H(\mf p_Q)$ and $\mf h$. Moreover, the coherence relation on four elements reads
  \begin{multline}
    \phi^{(1)}(\mu_4^{\operatorname H(\mf p_Q)}(\tte_p,\ttr^{ij},\ttr^{kl},\ttr^{mn}))=\mu_2^{\mf h}(\phi^{(1)}(\tte_p),\phi^{(3)}(\ttr^{ij},\ttr^{kl},\ttr^{mn}))\\+\text{permutations}.
  \end{multline}
  Now, the left-hand side is nonzero and lies in the copy of $\mf{sl}(L)$ inside $\mf h$. But since $\mu_2^{\mf h}$ agrees with $\mu_2^{\operatorname H(\mf p_Q)}$, by virtue of $\phi^{(1)}$ being a Lie algebra morphism, we have that $\mu_2^{\mf h}$ is $\mf{sl}(L)$-equivariant, and hence the right-hand side cannot lie in $\mf{sl}(L)$, a contradiction.
\end{proof}

\subsection{Action on \texorpdfstring{\(\mathbb A^5\)}{𝔸⁵}}
The pure spinor formalism \cite{Eager_2022,Elliott_2023} associates certain sheaves on (a derived replacement of) the variety of Maurer--Cartan elements to off-shell representations of the super-Poincaré algebra.
In particular, for the ten-dimensional \(\mathcal N=(1,0)\) super-Poincaré algebra, it associates to the structure sheaf of the Maurer--Cartan variety \(\operatorname{Spec}\mathbb C[\lambda^\alpha]/(\gamma^\mu_{\alpha\beta}\lambda^\alpha\lambda^\beta)\) the pure spinor supermultiplet \cite{Eager_2022}
\begin{equation}
\begin{aligned}
    M&\coloneqq\mathbb C[x^\mu,\theta^\alpha,\lambda^\alpha]/(\gamma^\mu_{\alpha\beta}\lambda^\alpha\lambda^\beta)\\
    &\cong\left(\bigodot(10000)_{\mathfrak o(V)}\right)\otimes\left(\bigwedge(00010)_{\mathfrak o(V)}\right)\otimes\left(\bigoplus_{i=0}^\infty(000i0)_{\mathfrak o(V)}\right)
\end{aligned}
\end{equation}
where the \(\mathfrak o(V)\)-representation has been specified by Dynkin labels; we use the index notation where \(V\) indices are \(_\mu\) and \(S_+\) indices are \(_\alpha\) (hence \(S_-\cong S_+^*\) indices are \(^\alpha\)). The formal variables \(x,\theta,\lambda\) transform as \(V,S_-,S_-\) respectively under \(\mathfrak o(V)\), which in turn determines the \(\mathfrak o(V)\)-representation on \(M\).
The generators carry the degrees\footnote{In fact, this grading can be refined into a bigrading \cite[(3.15)~ff.]{Eager_2022}, and the \(\mathfrak p\)-representation respects this bigrading. But we do not need this fact.}
\begin{align}\label{eq:total_grading}
    |x|&=-2 & |\theta| &= -1 & |\lambda|&= 0.
\end{align}
 The differential is \cite[(3.14), (3.19)]{Eager_2022}
\begin{equation}
    d \coloneqq \lambda^\alpha\left(\frac\partial{\partial\theta^\alpha}-\gamma^\mu_{\alpha\beta}\theta^\beta\frac\partial{\partial x^\mu}\right).
\end{equation}
The \(\mathfrak o(V)\)-representation of \(M\) extends to a strict representation of \(\mathfrak p\) as
\begin{subequations}
  \label{eq:prep}
\begin{align}
   \rho^{(1)}_0(\tte_\mu)&= \frac\partial{\partial x^\mu}, \\
   \rho^{(1)}_0(\ttd_\alpha) &= \frac\partial{\partial\theta^\alpha}+\gamma^\mu_{\alpha\beta}\theta^\beta\frac\partial{\partial x^\mu}. \label{eq:10d_supertranslation_action}
\end{align}
\end{subequations}

We can twist this \(\mathfrak p\)-representation to obtain a (strict) representation  of \(\mathfrak p_Q\) on \(M_Q\); the action of $\mf p_Q$ is through \eqref{eq:prep}, but the differential on $M_Q$ is now $d+\rho_{0}^{(1)}(Q)$. The cohomology of $M_Q$ is the ring of regular functions on \(\mathbb A^5\).
\begin{theorem}[{\cite[Theorem 3.A]{saberi2021twisting}}]\label{thm:saberi_theorem}
The cohomology of \(M_Q\) is
\begin{equation}\operatorname H(M_Q)=\mathbb C[z^i],\end{equation}
where \(z^i=(z^1,\dotsc,z^5)\) is a formal variable of degree \(-2\) transforming under \(\mathfrak{sl}(L)\) as the defining representation \(L\), such that the \(L_\infty\)-representation of the subalgebra \(\mathfrak{isl}(L)\) of \(\operatorname H(\mathfrak p_Q)\) is
\begin{align}
\rho^{(1)}(\tilde\ttr^i{}_j)&=z^i\frac\partial{\partial z^j}-\frac15\delta^i_jz^k\frac\partial{\partial z^k},&
\rho^{(1)}(\tte_i)&=\frac\partial{\partial z^i},
\end{align}
with \(\rho^{(k)}=0\) for \(k\ge2\).
\end{theorem}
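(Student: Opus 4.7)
The plan is three-fold: first, compute the cochain cohomology $\operatorname{H}(M_Q)$ directly; second, identify the induced strict $\mathfrak{isl}(L)$-action; and third, rule out higher brackets $\rho^{(k)}$ for $k\geq 2$ that could in principle arise from homotopy transfer of the $L_\infty$-structure on $\mathfrak p_Q \ltimes M_Q$.

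To compute $\operatorname{H}(M_Q)$, I would filter $M_Q$ by polynomial degree in $x^\mu$. The associated-graded differential is then the purely algebraic piece of $d_Q + \rho^{(1)}_0(Q)$ involving only $\partial_{\theta^\alpha}$; this is a Koszul-type differential on $\mathbb{C}[\lambda]/(\gamma^\mu_{\alpha\beta}\lambda^\alpha\lambda^\beta) \otimes \bigwedge^\bullet\{\theta^\alpha\}$. Using the $\mathfrak{sl}(L)$-decompositions $S_\pm \cong \bigwedge^{\text{even}/\text{odd}} L^*$, under which $Q$ spans the trivial summand, one identifies the cohomology of this piece with $\mathbb{C}[x^\mu]$ tensored only with the top $\bigwedge^3 L^*$-component of $\theta$. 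The next page of the spectral sequence, carrying the remaining $\partial_{x^\mu}$-piece of $d_Q$, then kills both this residual $\theta$-factor and the $L^*$-valued components of $x$, leaving $\operatorname{H}(M_Q) = \mathbb{C}[z^i]$ where $z^i$ are the $L$-components of $x$.

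For the $\mathfrak{isl}(L)$-action, I would use the $\mathfrak{sl}(L)$-equivariant embedding $i\colon \operatorname{H}(\mathfrak p_Q) \hookrightarrow \mathfrak p_Q$ from the retract of \cref{thm:H(p_Q)}, which sends $\tilde\ttr^i{}_j$ and $\tte_i$ to the corresponding elements of $\mathfrak{sl}(L) \subset \mathfrak o(V)$ and $L^* \subset V$, respectively. Applying the strict action \eqref{eq:prep} to $i'(\mathbb{C}[z^i]) \subset M_Q$ and projecting via the retract's projection $p'$ yields the claimed formulas: $\mathfrak{sl}(L)$ rotates $z^i$ in the defining representation (with the trace subtraction coming from the $\mathfrak{gl}(L)$-$\mathfrak{sl}(L)$ split \eqref{eq:define_sl(L)_basis}), and $\tte_i$ differentiates.

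For the vanishing of higher brackets, I would enumerate the Feynman trees of homotopy transfer contributing to $\rho^{(k)}$, i.e.\ trees with $k$ leaves in $\operatorname{H}(\mathfrak p_Q)$, one leaf in $\operatorname{H}(M_Q)$, internal $\mu_2^{\mathfrak p \ltimes M}$-vertices, and homotopies $h$ or $h'$ on internal edges. As in the proof of \cref{thm:H(p_Q)}, the only intermediate states that can arise in the $\mathfrak p_Q$-half of the tree are $\ttd^i$ and $\ttr_{ij}$; one must then verify by explicit computation that the action of such intermediate states via $\rho_0^{(1)}$ on elements of $i'(\mathbb{C}[z^i])$ produces $M_Q$-elements that are either annihilated by the next homotopy $h'$ or project to zero under $p'$. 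The main obstacle lies precisely here: unlike in \cref{thm:H(p_Q)}, where a nontrivial $\mu_4$ does survive, one must rule out \emph{all} such surviving contributions, which seems to require a careful $\mathfrak{sl}(L)$-equivariant bookkeeping of representation content combined with the specific form of \eqref{eq:prep}, exploiting that intermediate $M_Q$-states produced this way always lie outside the $\mathbb{C}[z^i]$-cohomology.
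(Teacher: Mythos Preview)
The paper does not give its own proof of this theorem: it is quoted verbatim from \cite{saberi2021twisting} and stated without a \texttt{proof} environment. So there is no proof in the paper to compare your parts~1--2 against; your spectral-sequence sketch for $\operatorname H(M_Q)$ and your identification of $\rho^{(1)}$ by restriction are reasonable outlines, though the details you give on the $E_1$-page (the ``$\bigwedge^3 L^*$-component of $\theta$'') are not obviously correct and would need care.

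Where there \emph{is} something in the paper to compare against is part~3, the vanishing of $\rho^{(k)}|_{\mathfrak{isl}(L)}$ for $k\ge2$. This is established in the paper not by tracking intermediate $M_Q$-states but as an immediate byproduct of the proof of \cref{thm:H(p_Q)-action-on-A5}: \cref{lem:rule_out_sl(5)} uses $\mathfrak{sl}(L)$-equivariance of the retract to kill any $\rho^{(k)}$ with an $\tilde\ttr^i{}_j$ argument, and then \cref{lem:representation_constraint_on_rho} shows by pure $\mathfrak{sl}(L)$ representation theory that the only surviving $(p,q)$ are $(3,0)$ and $(4,3)$, both of which require $\ttr^{ij}$ inputs --- so nothing survives when all arguments lie in $\mathfrak{isl}(L)=\langle\tilde\ttr^i{}_j,\tte_i\rangle$.

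Your proposed route for part~3 has two issues. First, the intermediate $\mathfrak p_Q$-states $\ttd^i$ and $\ttr_{ij}$ that you invoke from the proof of \cref{thm:H(p_Q)} arise from $h[\tte_k,\ttr^{ij}]$ and $h[\ttr^{ij},\ttd^k]$; they require an $\ttr^{ij}$ leaf, which is \emph{not} in $\mathfrak{isl}(L)$. With only $\tilde\ttr^i{}_j$ and $\tte_i$ as leaves, $[\mathfrak{isl}(L),\mathfrak{isl}(L)]\subset\mathfrak{isl}(L)\subset\operatorname{im}i$, so $h$ kills all such brackets and there are no $\mathfrak p_Q$-intermediate states at all. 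Second, you acknowledge that tracking the $M_Q$-side intermediate states is the ``main obstacle'' and leave it open; this is exactly what the paper avoids. The equivariance argument of \cref{lem:rule_out_sl(5)} disposes of $\tilde\ttr^i{}_j$ in one stroke (since $h'\rho_0(\tilde\ttr^i{}_j)=\rho_0(\tilde\ttr^i{}_j)h'$), and the remaining case of only $\tte_i$-arguments is then excluded by the degree/representation count of \cref{lem:representation_constraint_on_rho} without ever opening up $M_Q$. Your explicit $M_Q$-bookkeeping is not wrong in principle, but it is both harder and, as you admit, incomplete.
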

\noindent This corresponds to the fact that the holomorphic twist of ten-dimensional supersymmetric Yang--Mills theory is holomorphic Chern--Simons theory \cite{Baulieu:2010ch,elliott2020taxonomy,saberi2021twisting}, whose space of fields is\footnote{up to issues such as holomorphic versus algebraic functions, which we ignore} the algebraic Dolbeault complex of \(\mathbb A^5\), namely \(\mathbb C[z^i,\bar z_i,\mathrm d\bar z^i]\), and whose cohomology is therefore \(\mathbb C[z^i]\).

Furthermore, the cohomology \(\mathbb C[z^i]=\operatorname H(M_Q)\) is included into \(M_Q\) as a \(\mathfrak{sl}(L)\)-subrepresentation \cite{saberi2021twisting}. Since \(\mathfrak{sl}(L)\) is simple, there exists an $\mf{sl}(L)$-equivariant strong deformation retract of cochain complexes
  \begin{equation}\label{eq:module_sdr}
    \begin{tikzcd}
        \ar[loop left,"{(h,h')}"] \mf p_Q\oplus M_Q \rar[shift left, "{(p,p')}"] &  \lar[shift left, "{(i,i')}"] \operatorname H(\mathfrak p_Q)\oplus\operatorname H(M_Q),
    \end{tikzcd}
\end{equation}
whose restriction to \(\mathfrak p_Q\leftrightarrow\operatorname H(\mathfrak p_Q)\) is the strong deformation-retract \((i,p,h)\) given in \eqref{eq:ogretract}.
Thus, by taking the minimal model of \((\mathfrak p_Q,M_Q)\) along the strong deformation retract \eqref{eq:module_sdr}, the above \(\mathfrak{isl}(L)\)-representation extends into an \(L_\infty\)-representation of the entirety of \(\operatorname H(\mathfrak p_Q)\), and such minimal models are unique up to quasi-isomorphisms of \(L_\infty\)-algebra representations.
This minimal model is an $L_\infty$-representation of the $L_\infty$-algebra $\operatorname H(\mf p_Q)$ on $\operatorname H(M_Q)=\mathbb C[z^i]$.
The following theorem computes this minimal model explicitly.
\begin{theorem}\label{thm:H(p_Q)-action-on-A5}
The minimal model of the \(L_\infty\)-representation of \(\mathfrak p_Q\) on \(M_Q\) obtained using the strong deformation retract \((i,i';p,p';h,h')\) is the \(\operatorname H(\mathfrak p_Q)\)-representation on \(\operatorname H(M_Q)=\mathbb C[z^i]\) given by
\begin{equation}
\begin{aligned}
\rho^{(1)}(\tilde\ttr^i{}_j) &= z^i \frac\partial{\partial z^j}-\frac15\deltaup^i_jz^k\frac\partial{\partial z^k}, \\
\rho^{(1)}(\tte_i) &= \frac\partial{\partial z^i}, \\
\rho^{(3)}(\ttr^{ij},\ttr^{kl},\ttr^{mn}) &=
\frac12\Bigg(z^{[i}\varepsilon^{j]klp[m}z^{n]}
-
z^{[i}\varepsilon^{j]mnp[k}z^{l]}\\
&\qquad+z^{[k}\varepsilon^{l]mnp[i}z^{j]}
-
z^{[k}\varepsilon^{l]ijp[m}z^{n]}\\
&\qquad+z^{[m}\varepsilon^{n]ijp[k}z^{l]}
-
z^{[m}\varepsilon^{n]klp[i}z^{j]}\Bigg)\frac\partial{\partial z^p}\\
&\eqqcolon P^{ijklmn;}{}_{pq}^rz^pz^q\frac\partial{\partial z^r},
\end{aligned}
\end{equation}
with all other components vanishing (in particular, \(\rho^{(1)}(\ttr^{ij})=0\)),
where \(P^{ijklmn;}{}_{pq}^r\) is the projection
\begin{equation}
  \label{eq:rho3proj}
    (0100)_{\mathfrak{sl}(L)}^{\wedge3}=(0020)_{\mathfrak{sl}(L)}\oplus(2001)_{\mathfrak{sl}(L)}\to(2001)_{\mathfrak{sl}(L)}
\end{equation}
in terms of \(\mathfrak{sl}(L)\) Dynkin labels
or, in Young tableau notation,
\begin{equation}
    \ydiagram{1,1}^{\wedge3}
    =
    \ydiagram{2,2,2}
    \oplus
    \ydiagram{3,1,1,1}
    \to
    \ydiagram{3,1,1,1}.
\end{equation}
\end{theorem}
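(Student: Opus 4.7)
The plan is to apply homotopy transfer of \(L_\infty\)-representations along the strong deformation retract \eqref{eq:module_sdr}, in the spirit of the proof of \cref{thm:H(p_Q)}. The induced higher maps \(\rho^{(k)}\) are then given by a sum over rooted trees with \(k\) algebra leaves decorated by \(i\), one distinguished module leaf decorated by \(i'\), internal vertices being either \(\mu_2^{\mathfrak p}\) or the strict action \(\rho_0^{(1)}\), internal edges decorated by \(h\) or \(h'\), and the root by \(p'\). The resulting \(\rho^{(k)}\) are \(\mathfrak{sl}(L)\)-equivariant by the \(\mathfrak{sl}(L)\)-equivariance of the SDR, and this equivariance together with the grading will do most of the work.

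Several easy reductions come first. Since both \(\operatorname H(\mathfrak p_Q)\) and \(\operatorname H(M_Q)=\mathbb C[z^i]\) are concentrated in even degrees, every \(\rho^{(k)}\) of odd degree \(1-k\) vanishes for parity reasons; in particular \(\rho^{(2k)}=0\) for all \(k\ge 1\). For \(\rho^{(1)}\), the values on \(\tilde\ttr^i{}_j\) and \(\tte_i\) are fixed by \cref{thm:saberi_theorem}. For \(\ttr^{ij}\), one observes that \(\ttr^{ij}\in L^{\wedge2}\subset\mathfrak o(V)\) acts on \(V=L\oplus L^*\) by mapping \(L^*\) into \(L\) and annihilating \(L\); under Saberi's identification of \(z^i\) with (a representative involving) the \(L\)-coordinate \(x^i\), the strict action \(\rho_0^{(1)}(\ttr^{ij})\) on \(\mathbb C[z^i]\) therefore produces only \(L^*\)-coordinate directions, which are killed by \(p'\). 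Since \(\ttr^{ij}\) acts as a derivation, this extends to all of \(\mathbb C[z^i]\), giving \(\rho^{(1)}(\ttr^{ij})=0\).

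The substantive calculation is \(\rho^{(3)}(\ttr^{ij},\ttr^{kl},\ttr^{mn})\). Because \(L^{\wedge2}\) is abelian in \(\mathfrak o(V)\), any tree with a \(\mu_2^{\mathfrak p}\)-vertex on two \(\ttr\)'s vanishes, and what remains are combs in which the three \(\ttr\)'s act successively on the module leg via \(\rho_0^{(1)}\), with intermediate \(L^*\)-valued states propagated by the \(\mathfrak{sl}(L)\)-equivariant homotopy \(h'\). Each application of \(\rho_0^{(1)}(\ttr)\) turns one \(L\)-coordinate into an \(L^*\)-coordinate, and a subsequent application rotates it back; after three applications two of the original \(z\)-factors have migrated to new slots while one has been differentiated, producing the first-order operator \(z^pz^q\partial_{z^r}\). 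Summing over the six orderings of the \(\ttr\)'s with the signs produced by the tensor trick yields the displayed formula. By \(\mathfrak{sl}(L)\)-equivariance the result must lie in the unique \(\mathfrak{sl}(L)\)-irreducible copy of \((2001)_{\mathfrak{sl}(L)}\) appearing in \eqref{eq:rho3proj}, and this already fixes the answer up to an overall scalar that can be pinned down by a single evaluation on a convenient monomial.

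The main obstacle is the explicit bookkeeping of intermediate states in \(M_Q\) and of the action of \(h'\) on them, which is heavier than in the proof of \cref{thm:H(p_Q)} because the pure-spinor constraint also has to be tracked. The same intermediate-state enumeration is what rules out the remaining components. Any \(\rho^{(3)}\) with at least one \(\tte_i\) or \(\tilde\ttr^i{}_j\) input vanishes because the strict action of these generators already lands in \(\mathbb C[z^i]\) (and so cannot launch an interior recursion), combined with the absence of an \(\mathfrak{sl}(L)\)-equivariant map of the right degree and symmetry type into \(\mathbb C[z^i]\) in the remaining cases. For odd \(k\ge 5\), the same enumeration shows that every chain of intermediate states must close off by the third \(\ttr\)-application, so no new nontrivial trees contribute.
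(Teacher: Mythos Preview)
Your overall setup (parity reduction, \(\mathfrak{sl}(L)\)-equivariance, vanishing of \(\rho^{(1)}(\ttr^{ij})\)) matches the paper, but from there on your route diverges substantially. You propose to compute \(\rho^{(3)}(\ttr,\ttr,\ttr)\) by direct tree enumeration in \(M_Q\), tracking intermediate states through the explicit homotopy \(h'\) on the pure-spinor module. The paper deliberately avoids this: it never writes down \(h'\) or any intermediate state in \(M_Q\). Instead it argues purely on the target side. By \(\mathfrak{sl}(L)\)-equivariance, \(\rho^{(k)}\) with \(p\) arguments \(\ttr^{ij}\) and \(q\) arguments \(\tte_i\) must land in an irreducible shared by \((L^{\wedge2})^{\wedge p}\otimes(L^*)^{\wedge q}\) and by the space of differential operators \(z^a(\partial_z)^b\) of the correct degree. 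A finite representation-theoretic check (\cref{lem:representation_constraint_on_rho}) shows this intersection is empty except for \((p,q)=(3,0)\) and \((p,q)=(4,3)\). The resulting ans\"atze for \(\rho^{(3)}\) and \(\rho^{(7)}\) are then fixed, not by Feynman diagrams, but by the \(L_\infty\)-module coherence relations: \([\rho^{(1)}(\tte_i),\rho^{(3)}(\ttr,\ttr,\ttr)]=\rho^{(1)}(\mu_4(\tte_i,\ttr,\ttr,\ttr))\) determines all coefficients in \(\rho^{(3)}\) using the already-computed \(\mu_4\) of \cref{thm:H(p_Q)}, and a second coherence relation forces \(\rho^{(7)}=0\).

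Your direct approach could in principle work, but as written it has two real gaps. First, the computation of \(\rho^{(3)}\) is only sketched: the claim that three successive applications of \(\rho_0(\ttr)\) on \(\mathbb C[z]\) produce precisely \(z^pz^q\partial_{z^r}\) with the stated projector requires actually knowing how \(h'\) acts on the relevant pieces of \(M_Q\) (which involve \(\theta\) and \(\lambda\), not just the \(L^*\)-coordinate of \(x\)), and you do not supply this. Second, and more seriously, your closing argument that ``every chain of intermediate states must close off by the third \(\ttr\)-application'' does not dispose of \(\rho^{(7)}(\ttr,\ttr,\ttr,\ttr,\tte,\tte,\tte)\). The paper's own analysis shows this component is \emph{not} excluded by representation theory alone; it takes a separate coherence identity \eqref{eq:rho(7)_coherence} to kill it. Your intermediate-state heuristic gives no mechanism for this vanishing, and a comb with four \(\ttr\)'s interspersed with \(\tte\)'s acting on genuine off-cohomology states of \(M_Q\) is not obviously zero. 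The paper's strategy of trading explicit homotopies for coherence relations with \(\mu_4\) is precisely what lets one bypass this bookkeeping.
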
\begin{proof}
First, note that for degree reasons, we can only have nonzero \(\rho^{(k)}\) for odd \(k\) since \(\mathbb C[z^i]\) and \(\operatorname H(\mathfrak p_Q)\) are all concentrated in even degree and \(\rho^{(k)}\) carries degree \(1-k\). The leading component \(\rho^{(1)}\) is fixed simply by restriction of the \(\mathfrak p_Q\)-representation \(\rho_0\) on \(M_Q\) to \(i(\operatorname H(\mathfrak p_Q))\subset\mathfrak p_Q\) and \(i'(\operatorname H(M_Q))\subset M_Q\) as 
\begin{equation}
    \rho^{(1)}(x) = p'\circ\rho_0(i(x))\circ i'
\end{equation}
for \(x\in\operatorname H(\mathfrak p_Q)\) (so that \(i(x)\in\mathfrak p_Q\));
in particular, \(\rho^{(1)}(\ttr^{ij})=0\). Furthermore, the \(\rho^{(k)}\) vanish whenever one of the arguments is \(\tilde\ttr^i{}_j\) except when \(k=1\) (\cref{lem:rule_out_sl(5)}).

Hence, it suffices to determine \(\rho^{(3)},\rho^{(5)},\rho^{(7)},\dotsc\) where all arguments are either \(\ttr^{ij}\) or \(\tte_i\). Now, the possibilities of \(\rho^{(k)}\) are constrained by the fact that all operations \(\mu_k\), \(\rho^{(k)}\), and the strong deformation retract \((i,i';p,p';h,h')\) are \(\mathfrak{sl}(L)\)-equivariant. Suppose that \(\rho^{(k)}\) does not vanish when fed \(p\) arguments of the form \(\ttr^{ij}\) and \(q\) arguments of the form \(\tte_i\) with \(p+q=k\equiv1\pmod2\). Then, representation-theoretically, it must yield a nontrivial \(\mathfrak{sl}(L)\)-representation that is a direct summand of
\begin{equation}\label{eq:fiddledee}
    \left( L^{\wedge2}\right)^{\wedge p}\otimes(L^*)^{\wedge q}.
\end{equation}
On the other hand, it must carry the degree \((1-p-q)+2q=1-p+q\), and hence be a sum of terms of the form
\begin{equation}
    \begin{cases}
        z^n\left(\frac\partial{\partial z}\right)^{n+(1-p+q)/2}&\text{if \(1-p+q\ge0\)}\\
        z^{n-(1-p+q)/2}\left(\frac\partial{\partial z}\right)^n&\text{if \(1-p+q\le0\)},
    \end{cases}
\end{equation}
where $z^n$ refers to a product $z^{i_1}\cdots z^{i_n}$, and similarly for $(\pdv{}{z})^n$. Since \(z^i\) transforms as \(L\) and \(\partial/\partial z^i\) as \(L^*\), this must transform under \(\mathfrak{sl}(L)\) as a direct summand of
\begin{equation}\label{eq:fiddledum}
    \begin{cases}
        L^{\odot n}\otimes(L^*)^{\odot(n+(1-p+q)/2)}&\text{if \(1-p+q\ge0\)}\\
        L^{\odot(n-(1-p+q)/2)}\otimes(L^*)^{\odot n}&\text{if \(1-p+q\le0\)}.
    \end{cases}
\end{equation}
Thus, the two \(\mathfrak{sl}(L)\)-representations \eqref{eq:fiddledee} and \eqref{eq:fiddledum} must share some nontrivial subrepresentations if the corresponding \(\rho^{(k)}\) is to not vanish. \Cref{lem:representation_constraint_on_rho} shows that this is only possible for \((p,q)=(3,0)\) and \((p,q)=(4,3)\), corresponding to
\begin{equation}\label{eq:rho(3)_ansatz}
\rho^{(3)}(\ttr^{ij},\ttr^{kl},\ttr^{mn})= P^{ijklmn;}{}_{pq}^r\left(\alpha_0z^pz^q\frac\partial{\partial z^r}+\alpha_1z^pz^qz^s\frac{\partial^2}{\partial z^r\partial z^s}+\dotsb\right)
\end{equation}
and
\begin{multline}
\rho^{(7)}(\ttr^{ij},\ttr^{kl},\ttr^{mn},\ttr^{pq},\tte_r,\tte_s,\tte_t)\\=P^{ijklmnpq;}_{rst}{}^u_v\left(\beta_0z^v\frac\partial{\partial z^u}
    +
    \beta_1z^vz^w\frac{\partial^2}{\partial z^u\partial z^w}+\dotsb\right),\label{eq:rho(7)_ansatz}
\end{multline}
respectively, where \(P^{ijklmn;}{}_{pq}^r\) is the projector \((0100)^{\wedge3}_{\mathfrak{sl}(L)}\to(2001)_{\mathfrak{sl}(L)}\) as in \eqref{eq:rho3proj} and \(P^{ijklmnpq;}_{rst}{}^u_v\) is the projector \((0100)^{\wedge4}_{\mathfrak{sl}(L)}\otimes(0001)^{\wedge3}_{\mathfrak{sl}(L)}\to(1001)_{\mathfrak{sl}(L)}\).

Now, we must solve the coherence relations. One \(L_\infty\)-representation coherence relation states\footnote{In this coherence relation, in our case, terms of the form \([\rho^{(0)},\rho^{(4)}]\), \([\rho^{(2)},\rho^{(2)}]\), \(\rho^{(4)}(\mu_1)\), \(\rho^{(3)}(\mu_2)\), and \(\rho^{(2)}(\mu_3)\) vanish.}
\begin{equation}\label{eq:rho3_coherence_relation}
    [\rho^{(1)}(\tte_i),\rho^{(3)}(\ttr^{jk},\ttr^{lm},\ttr^{np})]=\rho^{(1)}(\mu_4(\tte_i,\ttr^{jk},\ttr^{lm},\ttr^{np})).
  \end{equation}
  Substituting \(\rho^{(1)}(\tte_i)=\partial/\partial z^i\) and \eqref{eq:rho(3)_ansatz} into \eqref{eq:rho3_coherence_relation} yields
  \begin{equation}
    \left[
        \frac\partial{\partial z^i},
        P^{jklmnp;}{}_{qr}^s\left(\alpha_0z^qz^r\frac\partial{\partial z^s}+\alpha_1z^qz^rz^t\frac{\partial^2}{\partial z^s\partial z^t}+\dotsb\right)
    \right]\\
    =
    P^{jklmnp;}_i{}^r_qz^q\frac\partial{\partial z^r},
  \end{equation}
  where \(P^{jklmnp;}_i{}^r_q\) is the projector \eqref{eq:mu4_projector}.
  Solving this yields \(\alpha_0=1\) and \(\alpha_n=0\) for \(n>0\).
  
  Next, we have the \(L_\infty\)-module coherence relation\footnote{In this coherence relation, terms of the form \([\rho^{(0)},\rho^{(8)}]\), \([\rho^{(2)},\rho^{(6)}]\), \([\rho^{(3)},\rho^{(5)}]\), \([\rho^{(4)},\rho^{(3)}]\), \(\rho^{(1)}(\mu_8)\), \(\rho^{(2)}(\mu_7)\), \dots, \(\rho^{(8)}(\mu_1)\) vanish.}
\begin{equation}\label{eq:rho(7)_coherence}
    0=\left[\rho^{(7)}(\ttr^{ij},\ttr^{kl},\ttr^{mn},\ttr^{pq},\tte_{[r},\tte_s,\tte_t),\rho^{(1)}(\tte_{u]})\right].
\end{equation}
Plugging in the ansatz \eqref{eq:rho(7)_ansatz} into \eqref{eq:rho(7)_coherence} yields
\begin{equation}
    0=\left[P^{ijklmnpq;}_{[rst|}{}^v_w\left(\beta_0z^w\frac\partial{\partial z^v}
    +
    \beta_1z^wz^x\frac{\partial^2}{\partial z^v\partial z^x}+\dotsb\right),\frac\partial{\partial z^{|u]}}\right].
\end{equation}
Solving this shows that the coefficients \(\beta_0,\beta_1,\dotsc\) must all vanish since non-constant-coefficient differential operators do not commute with \(\partial/\partial z^l\).
\end{proof}

\begin{lemma}\label{lem:rule_out_sl(5)}
In the \(L_\infty\)-representation of the \(L_\infty\)-algebra \(\operatorname H(\mathfrak p_Q)\) on \(\mathbb C[z^i]\) obtained by homotopy transfer from \(M_Q\) using the \(\mathfrak{sl}(L)\)-equivariant strong deformation retract \((i,i';p,p';h,h')\), we have
\begin{equation}
    \rho^{(k)}(\tilde\ttr^i{}_j,\dotsc) = 0
\end{equation}
if \(k\ge2\).
\end{lemma}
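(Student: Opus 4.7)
The plan is to enumerate the trees contributing to \(\rho^{(k)}(\tilde\ttr^i{}_j,x_2,\dotsc,x_k)\) via homotopy transfer and show that each one vanishes whenever \(k\geq2\). The two ingredients will be: the \(\mf{sl}(L)\)-equivariance of \((i,i';p,p';h,h')\), of \(\mu_2^{\mf p_Q}\), and of \(\rho_0^{(1)}\); and the standard side conditions \(hi=0\), \(ph=0\), \(h^2=0\), \(h'i'=0\), \(p'h'=0\), \((h')^2=0\) that one may assume to hold for a strong deformation retract. Equivariance means in particular that \(\operatorname{ad}_{\tilde\ttr^i{}_j}\) commutes with \(h,i,p\) and that \(\rho_0^{(1)}(\tilde\ttr^i{}_j)\) commutes with \(h',i',p'\).

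I would fix a tree \(T\) contributing to \(\rho^{(k)}(\tilde\ttr^i{}_j,x_2,\dotsc,x_k)\), let \(v\) be the unique internal node adjacent to the \(\tilde\ttr^i{}_j\)-leaf, and let \(y\) be the output of the other branch meeting \(v\) from below. Depending on whether that branch is a \(\mf p_Q\)-leaf, an \(M_Q\)-leaf, a nontrivial \(\mf p_Q\)-subtree, or a nontrivial \(M_Q\)-subtree, one has \(y\in\operatorname{im}(i)\), \(\operatorname{im}(i')\), \(\operatorname{im}(h)\), or \(\operatorname{im}(h')\) respectively. The bracket at \(v\) reads \(\operatorname{ad}_{\tilde\ttr^i{}_j}(y)\) when \(y\in\mf p_Q\), and \(\rho_0^{(1)}(\tilde\ttr^i{}_j)(y)\) when \(y\in M_Q\). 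The argument then splits on whether \(v\) is the root.

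If \(v\) is not the root, its output is processed by an edge carrying \(h\) or \(h'\); commuting this past the \(\mf{sl}(L)\)-action by equivariance yields \(\operatorname{ad}_{\tilde\ttr^i{}_j}(h(y))\) or \(\rho_0^{(1)}(\tilde\ttr^i{}_j)(h'(y))\), and one of the four side conditions forces \(h(y)=0\) or \(h'(y)=0\), so the tree vanishes. If \(v\) is the root, then since \(\rho^{(k)}\) lands in \(M_Q\) the node \(v\) must implement \(\rho_0^{(1)}\), forcing \(y\in M_Q\); and for \(k\geq2\) the other branch has at least two leaves and hence at least one internal node, so \(y\in\operatorname{im}(h')\). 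Equivariance together with \(p'h'=0\) then gives \(p'(\rho_0^{(1)}(\tilde\ttr^i{}_j)(y))=\rho^{(1)}(\tilde\ttr^i{}_j)(p'h'(\cdot))=0\). In every case the tree's contribution is zero; summing over trees gives the lemma.

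I do not expect any hard step: the only bookkeeping to get right is matching each of the four cases for \(y\) — leaf or subtree, \(\mf p_Q\) or \(M_Q\) — with the appropriate side condition, and noting the special role of \(k=1\) in the root case (where \(y=i'(m)\) and one recovers the nonzero \(\rho^{(1)}(\tilde\ttr^i{}_j)\) rather than a vanishing higher component). The compatibility of the \(\mf{sl}(L)\)-equivariance of the retract \eqref{eq:module_sdr} with its side conditions can be arranged simultaneously by the usual method of adjusting \(h,h'\) along their boundaries, since \(\mf{sl}(L)\) is finite-dimensional and semisimple.
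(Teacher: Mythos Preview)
Your proposal is correct and is essentially the same argument as the paper's: both enumerate the trees, focus on the vertex adjacent to the \(\tilde\ttr^i{}_j\)-leaf, and kill each contribution by combining \(\mathfrak{sl}(L)\)-equivariance of the retract with the side conditions \(p'h'=0\), \(h'i'=0\), \((h')^2=0\), \(hi=0\), \(h^2=0\). The only cosmetic differences are that the paper organizes the case split as three cases (root/module, non-root/module, non-root/algebra) rather than your root-vs-non-root split with four subcases for \(y\), and is slightly less explicit than you are about invoking the individual side conditions.
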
\begin{proof}
We are to perform a homotopy transfer of $L_\infty$-algebras along
\begin{equation}
    \begin{tikzcd}
        \ar[loop left,"{(h,h')}"] \mf p_Q\oplus M_Q \rar[shift left, "{(p,p')}"] &  \lar[shift left, "{(i,i')}"] \operatorname H(\mathfrak p_Q)\oplus\operatorname H(M_Q).
    \end{tikzcd}
  \end{equation}
Let us again use Feynman-diagrammatic terminology to refer to elements as `states'.
If \(\rho^{(k)}(\tilde\ttr^i{}_j,\dotsc)\ne0\), this would mean that there is at least one tree Feynman diagram with at least one external leg corresponding to \(\tilde\ttr^i{}_j\). Assuming that \(k\ge2\), we have the following possibilities.
\begin{enumerate}[(i)]
\item\label{case1} The vertex connected to this leg may be directly connected to \(p'\) as
\begin{equation}\label{eq:case_immediate_output}
    p'(\rho_0(\tilde\ttr^i{}_j)X)
    =
  \begin{tikzpicture}[scale=0.5,baseline={([yshift=-1ex]current bounding box.center)}]
    \draw [thick] (-2,0) -- (0,-2);
    \draw [thick, dashed] (2,0) -- (0,-2);
    \draw [thick, dashed] (0,-2) -- (0,-3);
    \node at (0,-3.5) {\footnotesize\(p'\)};
    \node at (2.5,0.5) {\footnotesize\(X\)};
    \node at (-2,0.5) {\footnotesize\(\tilde\ttr^i{}_j\)};
  \end{tikzpicture},
\end{equation}
where \(X\in M_Q\). In this case, we may assume \(X\) to be an intermediate state \(X=h'(\tilde X)\). (The alternative, that \(X\) lies in the cohomology, only yields \(\rho^{(1)}\).)
\item\label{case2} The vertex connected to this leg may feed into \(h\) and connect to the rest of the tree as
\begin{equation}\label{eq:case_not_immediate_output}
    p'(\dotsb h'(\rho_0(\tilde\ttr^i{}_j)X)\dotsb)
    =
  \begin{tikzpicture}[scale=0.5,baseline={([yshift=-1ex]current bounding box.center)}]
    \draw [thick] (-2,0) -- (0,-2);
    \draw [thick, dashed] (2,0) -- (0,-2);
    \draw [thick,dashed] (0,-2) -- (0,-5);
    \filldraw[color=white, fill=white] (0,-3.5) circle (0.5);
    \node at (0,-3.5) {\footnotesize\(h'\)};
    \node at (0,-5.5) {\footnotesize\(\vdots\)};
    \node at (2.5,0.5) {\footnotesize\( X\)};
    \node at (-2,0.5) {\footnotesize\(\tilde\ttr^i{}_j\)};
  \end{tikzpicture},
\end{equation}
where \(X\in M_Q\) may be either an intermediate state \(X=h'(\tilde X)\) or belong to the cohomology (\(X\in i'(\operatorname H(M_Q))\)). In either case, we have \(h(X)=0\).
\item\label{case3} The vertex connected to this leg may feed into \(h\) and connect to the rest of the tree as
\begin{equation}\label{eq:case_not_immediate_output_algebra}
    p'(\dotsb h[\tilde\ttr^i{}_j,x]\dotsb)
    =
  \begin{tikzpicture}[scale=0.5,baseline={([yshift=-1ex]current bounding box.center)}]
    \draw [thick] (-2,0) -- (0,-2);
    \draw [thick] (2,0) -- (0,-2);
    \draw [thick] (0,-2) -- (0,-5);
    \filldraw[color=white, fill=white] (0,-3.5) circle (0.5);
    \node at (0,-3.5) {\footnotesize\(h\)};
    \node at (0,-5.5) {\footnotesize\(\vdots\)};
    \node at (2.5,0.5) {\footnotesize\(x\)};
    \node at (-2,0.5) {\footnotesize\(\tilde\ttr^i{}_j\)};
  \end{tikzpicture},
\end{equation}
where \(x\in\mathfrak p_Q\) is either an intermediate state \(x=h(\tilde x)\) or belongs to the cohomology (\(x\in i(\operatorname H(\mathfrak p_Q))\)). In either case, \(h(x)=0\).
\end{enumerate}

In the first case \eqref{eq:case_immediate_output}, since the strong deformation retract \((i,i';p,p';h,h')\) is \(\mathfrak{sl}(L)\)-equivariant, \(p'(\rho_0(\tilde\ttr^i{}_j)h'(\tilde X))\) can be nonzero only if \(p'(h'(\tilde X))\) is already nonzero. But this cannot be the case since \((i',p',h')\) forms a strong deformation retract, whose definition requires \(p'\circ h'=0\).

Similarly, in the latter case \eqref{eq:case_not_immediate_output},
since the strong deformation retract \((i,i';p,p';h,h')\) is \(\mathfrak{sl}(L)\)-equivariant, \(h'(\rho_0(\tilde\ttr^i{}_j)X)\) can be nonzero only if \(h'(X))\) is already nonzero, but this is not possible.

Finally, in the last case \eqref{eq:case_not_immediate_output_algebra}, again, since the strong deformation retract \((i,i';p,p';h,h')\) is \(\mathfrak{sl}(L)\)-equivariant, \(h[\tilde\ttr^i{}_j,x]\) can be nonzero only if \(h(x)\) is already nonzero, which is not possible.
\end{proof}

\begin{lemma}\label{lem:representation_constraint_on_rho}
For \(p+q\) odd and \(p+q\ge3\), the \(\mathfrak{sl}(L)\)-representation
\begin{equation}
    R_{p,q}\coloneqq(L^{\wedge2})^{\wedge p}\otimes(L^*)^{\wedge q}
\end{equation}
has no irreducible components in common with
\begin{equation}
    \tilde R_{p,q}\coloneqq\begin{cases}
        \bigoplus_{n=0}^\infty L^{\odot n}\otimes(L^*)^{\odot(n+(1+q-p)/2)} &\text{if } 1+q-p\ge0 \\
        \bigoplus_{n=0}^\infty L^{\odot(n-(1+q-p)/2)}\otimes(L^*)^{\odot n} &\text{if } 1+q-p\le0
    \end{cases}
\end{equation}
except when \((p,q)=(3,0)\) or \((4,3)\), in which case the irreducible components in common are
\((2001)_{\mathfrak{sl}(L)}\) and \((1001)_{\mathfrak{sl}(L)}\)
respectively.
\end{lemma}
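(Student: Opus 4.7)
The plan is to reduce the lemma to a finite case analysis, combining the Pieri rule with a box-count argument. First, I identify the irreducible summands of \(\tilde R_{p,q}\): applying the standard decomposition \(L^{\odot n} \otimes (L^*)^{\odot m} = \bigoplus_{k=0}^{\min(n,m)} V((n-k)\omega_1 + (m-k)\omega_4)\), the irreducibles appearing in \(\tilde R_{p,q}\) are precisely the \(V(a\omega_1 + b\omega_4)\) with \(a, b \geq 0\) and \(a - b = (p-q-1)/2\); in a \(\mathfrak{gl}(L)\)-lift, each has Young diagram \((a+b+k, b+k, b+k, b+k, k)\) for some \(k \geq 0\).

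Second, I use the \(\mathfrak{sl}(L)\)-isomorphism \((L^*)^{\wedge q} \cong \bigwedge^{5-q} L\) to rewrite \(R_{p,q} \cong (L^{\wedge 2})^{\wedge p} \otimes \bigwedge^{5-q} L\) and apply the Pieri rule: an irreducible \(V(\mu)\) appears in \(R_{p,q}\) iff \(\mu/\lambda\) is a vertical \((5-q)\)-strip for some \(\lambda\) in the plethysm decomposition of \((L^{\wedge 2})^{\wedge p}\). The box-count identities \(|\lambda|=2p\) and \(|\mu|=a+4b+5k\), combined with \(a - b = (p-q-1)/2\), force \(5(b+k)=(3p-q+11)/2\), hence the modular condition \(3p-q\equiv 9\pmod{10}\). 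Together with the vanishings \((L^{\wedge 2})^{\wedge p}=0\) for \(p>10\) and \((L^*)^{\wedge q}=0\) for \(q>5\), and with \(p+q\) odd and \(p+q\ge 3\), only the candidates \((p,q)\in\{(1,4),(3,0),(4,3),(7,2),(8,5),(10,1)\}\) remain.

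For each candidate I will decompose \((L^{\wedge 2})^{\wedge p}\) into \(\mathfrak{sl}(L)\)-irreducibles via the plethysm \(\bigwedge^p \bigwedge^2 L\) (e.g., by direct character computation in the Schur basis), and apply Pieri to check which \(V(a\omega_1 + b\omega_4)\) with \(a - b = (p-q-1)/2\) appear in \(R_{p,q}\). I expect only \((p,q)=(3,0)\) and \((p,q)=(4,3)\) to yield a common irreducible, giving \((2001)_{\mathfrak{sl}(L)}\) and \((1001)_{\mathfrak{sl}(L)}\) respectively, with the other candidates excluded because the required \(\lambda\) fails to occur in the plethysm. The main technical obstacle is this final verification, which is straightforward but involves nontrivial bookkeeping with Young diagrams and characters.
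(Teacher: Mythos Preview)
Your approach is correct and shares its opening move with the paper: both identify the irreducibles of \(\tilde R_{p,q}\) as exactly the \(V(a\omega_1+b\omega_4)\) with \(a-b=(p-q-1)/2\), and both then reduce to checking, for each \((p,q)\) in the finite range \(0\le p\le10\), \(0\le q\le5\), whether any such irreducible occurs in \(R_{p,q}\). The paper simply iterates over all admissible \((p,q)\) and delegates the check to a computer algebra system. Your route is more refined: the box-count identity \(10(b+k)=3p-q+11\) yields the congruence \(3p-q\equiv9\pmod{10}\), cutting the case analysis down to the six pairs you list, which you then propose to handle by hand via the known plethysms \(\bigwedge^{p}\bigwedge^{2}L\) together with Pieri. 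This buys a proof that is in principle computer-free and makes transparent why almost all \((p,q)\) are excluded a priori, at the cost of having to work out the relevant plethysms explicitly (immediate for \(p\in\{1,8,10\}\) by duality with \(\bigwedge^{10-p}\), and only slightly more work for \(p\in\{4,7\}\)). Two small remarks: for \((p,q)=(1,4)\) the exclusion already follows from the nonnegativity constraints \(a\ge0\), \(k\ge0\) combined with \(b+k=1\) and \(a-b=-2\), so no plethysm is needed there; and for \((p,q)=(3,0)\) and \((4,3)\) you must verify not only that \((2001)\) respectively \((1001)\) occurs in \(R_{p,q}\), but that no other \((a00b)\) with the correct \(a-b\) does.
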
\begin{proof}
We must compute the tensor product appearing in \(\tilde R_{p,q}\). For \(1+q-p\ge0\) and any nonnegative integer \(n\ge0\), it is easy to see that
\begin{equation}
   \big(n000\big)\otimes\big(000(n+(1+q-p)/2)\big)
    =\bigoplus_{i=0}^n\big(i00(i+(1+q-p)/2)\big).
  \end{equation}
  Similarly, for \(1+q-p\le0\) we have
  \begin{equation}
   \big( (n-(1+q-p)/2))000\big)\otimes\big(000n\big)
    =\bigoplus_{i=0}^n\big((i-(1+q-p)/2))00i\big).
  \end{equation}
Thus, \(\tilde R_{p,q}\) only contains irreducible representations of the form
\begin{equation}
    \begin{cases}
    \big(i00(i+(1+q-p)/2)\big)&\text{if \(1+q-p\ge0\)}\\
   \big( (i-(1+q-p)/2)00i\big)&\text{if \(1+q-p\le0\)}
    \end{cases}
    \qquad(i\in\{0,1,2,\dotsc\}).
\end{equation}
Given this, iterating over\footnote{Recall that $(L^{\wedge 2})^{\wedge k}$ and $(L^*)^{\wedge l}$ are zero for $k>10$ and $l>5$.} \(p\in\{0,1,\dotsc,10\}\) and \(q\in\{0,1,\dotsc,5\}\) and verifying whether an irreducible representation of the above form appears in \(R_{p,q}\) (using e.g.~a computer algebra system), one can see that \((p,q)\in\{(3,0),(4,3)\}\) are the only possible solutions. 
\end{proof}

\section{Other dimensions and amounts of supersymmetry}\label{sec:higher_d}
The above construction works for general supersymmetry algebras and general supermultiplets, but ten-dimensional \(\mathcal N=(1,0)\) super-Poincaré algebra seems to be one of the very few in having nontrivial and purely bosonic higher products, at least if one is to start from the vector supermultiplet; a glance at \cite{elliott2020taxonomy} shows that this is the only case in which the cohomology is simply a polynomial ring in bosonic variables.

In general, for sufficiently large dimension \(n\), the number of spinorial components in a super-Poincaré algebra increases as \(\mathcal O(2^n)\) whereas bosonic components increase as \(\mathcal O(n^2)\). Indeed, already at 14 dimensions, the minimal spinor has 128 components while \(\mathfrak{io}(14)\) has 105 components. So we are restricted to 13 or fewer dimensions (unless we twist again to eliminate more supertranslations) even if one did not take into consideration no-go theorems about higher-spin theories (since we ignore dynamics here). Similarly, the 11-dimensional case (starting with the supergravity multiplet) is discussed in \cite{saberi2021twisting,Hahner:2023kts}. There are two possible cases. In one case \cite{saberi2021twisting}, \(\operatorname H(\mathfrak p_Q)\) contains fermionic elements. Then we expect the action of \(\operatorname H(\mathfrak p_Q)\) to contain a \(\rho^{(2)}\) involving the remaining fermionic elements.
In the other case \cite{Hahner:2023kts}, however, we expect to see a higher action of
\begin{equation}
    \operatorname H(\mathfrak p_Q) = (\mathfrak g_2\oplus\mathfrak{sl}(L)\oplus V_7\otimes L\oplus\mathbb C)\ltimes L
\end{equation}
(which should carry nontrivial \(\mu_2\) and \(\mu_4\))
on
\begin{equation}\mathbb A(L^*)=\operatorname{Spec}\mathbb C[z_1,z_2],\end{equation}
where \(L\) is a two-dimensional vector space and \(V_7\) is a seven-dimensional vector space; here, \(\mathbb C[z_1,z_2]=\operatorname H(\Omega^{0,\bullet}(L)\otimes\Omegaup^\bullet(V))\) is the cohomology of the Dolbeault--de~Rham complex on two complex and seven real dimensions.

On the other hand, if there are too few dimensions, higher products may vanish. For example, for the four-dimensional \(\mathcal N=1\) super-Poincaré algebra, the twist gives a decomposition of four-dimensional complexified spacetime \(V\) as \(V=L\oplus L^*\), where \(L\) is a two-dimensional vector space, with the twisted super-Poincaré algebra being \cite{saberi2021twisting}
\begin{equation}
    \mathfrak p_Q=\left(\begin{tikzcd}
         (L^*)^{\wedge2} \rar & L^{\wedge0} \\
        (\mathfrak{sl}(L)\ltimes L^{\wedge2})\oplus\mathfrak{gl}(1)_{\mathrm R} & L^{\wedge1}\rar & L \\
        \mathfrak{gl}(1)_{\mathrm{tr}} \rar &  L^{\wedge2} & L^*
    \end{tikzcd}\right),
\end{equation}
where \(\mathfrak{gl}(1)_{\mathrm R}\) is the R-symmetry and \(\mathfrak{gl}(1)_{\mathrm{tr}}\) is the trace part of \(\mathfrak{gl}(L)\).
Following the proof of \cref{thm:H(p_Q)}, we see that there are no higher brackets for \(\operatorname H(\mathfrak p_Q)\) by constructing all possible intermediate states: apply \(\mu_2( L^{\wedge2},-)\) to \(L^*[-2]\) to get \(L[-2]\); applying the homotopy \(h\) yields \(L^{\wedge1}\); but now applying another \(\mu_2( L^{\wedge2},-)\) simply kills everything.

\section*{Acknowledgements}
The authors (\acctextsc{DSHJ}, \acctextsc{HK}, and \acctextsc{CASY}) were supported in part by the Leverhulme Research Project Grant \acctextsc{RPG}--2021--092. The authors thank Fridrich Valach~III\orcidlink{0000-0003-0020-1999}, Charles Strickland-Constable\orcidlink{0000-0003-0294-1253}, Pieter Bomans\orcidlink{0000-0002-0907-9830}, and Jingxiang Wu\orcidlink{0000-0001-6867-1407} for helpful discussions; \acctextsc{casy} thanks Brian R. Williams\orcidlink{0000-0001-9641-861X}, and \acctextsc{dshj} thanks Ingmar Akira Saberi\orcidlink{0000-0002-2005-938X}.
\appendix
\section{Conventions}\label{app:conventions} 
In a basis adapted to the choice of pure spinor $Q\in S_+$, the super-Poincaré algebra $\mf p = \mf p^0 \oplus \mf p^1 \oplus \mf p^2$ in ten dimensions has the basis elements
\begin{align}
\ttr^i{}_j, \ttr^{ij}, \ttr_{ij} &\in \mf p^0,&
\ttd, \ttd_{ij},\ttd^i&\in \mf p^1,&
\tte^i,\tte_i&\in \mf p^2,
\end{align}
with \(\ttr^{ij}=-\ttr^{ji}\), \(\ttr_{ij}=-\ttr_{ji}\), and \(\ttd_{ij}=-\ttd_{ji}\).
The graded-skew-symmetric Lie brackets among these basis elements are
\begin{gather}
\begin{aligned}
\left[ \ttr^i{}_j, \ttr^k{}_l \right] &= \delta^k_j \ttr^i{}_l -\delta^i_l \ttr^k{}_j & 
\left[ \ttr^i{}_j, \ttr^{kl} \right] &= \delta^k_j \ttr^{il} + \delta^l_j \ttr^{ik} \\
\left[ \ttr^i{}_j, \ttr_{kl} \right] &= - \delta^i_k \ttr_{jl} - \delta^i_l \ttr_{kj} &
\left[ \ttr^{ij}, \ttr_{kl} \right] &= \delta^{i}_{k} \ttr^{j}{}_{l} - \delta^{j}_{k} \ttr^{i}{}_{l} - \delta^{i}_{l} \ttr^{j}{}_{k} + \delta^{j}_{l} \ttr^{i}{}_{k} \\
\left[ \ttr^i{}_j, \tte^k \right] &=  \delta^k_j \tte^i & 
\left[ \ttr^i{}_j, \tte_k \right] &= -\delta^i_k \tte_j \\
\left[ \ttr^{ij}, \tte_k \right] &=  \delta^j_k \tte^{i} - \delta^i_k \tte^{j} & 
\left[ \ttr_{ij}, \tte^k \right] &=  \delta_i^k \tte_{j} - \delta_j^k \tte_{i}  \\
\left[ \ttr^{ij} , \ttd \right] &= 0 & 
\left[ \ttr_{ij}, \ttd \right] &= -\ttd_{ij} \\  
\left[ \ttr^{ij} , \ttd_{kl} \right] &= (\delta^i_l \delta^j_k-\delta^i_k \delta^j_l) \ttd & 
\left[ \ttr_{ij},  \ttd_{kl} \right] &= -\varepsilon_{ijklm} \ttd^m \\  
\left[ \ttr^{ij} , \ttd^k \right] &= -\frac12\varepsilon^{ijklm} \ttd_{lm} & 
\left[ \ttr_{ij},  \ttd^k \right] &= 0 \\
\left[ \ttr^i{}_j, \ttd \right] &= \phantom + \frac 12 \delta^i_j \ttd &
 \left[ \ttr^i{}_j, \ttd_{kl} \right] &= \phantom + \frac 12 \delta^i_j \ttd_{kl} - \delta^i_k \ttd_{jl} - \delta^i_l \ttd_{kj} \\
\left[ \ttr^i{}_j, \ttd^k \right] &= -\frac 12 \delta^i_j \ttd^k + \delta^k_j \ttd^i &
\left[ \ttd , \ttd^i \right] &= \tte^i \\
\left[ \ttd^i , \ttd_{jk} \right] &= \delta^i_k \tte_j -\delta^i_j \tte_k  &
\left[ \ttd_{ij} , \ttd_{kl} \right] &= - \varepsilon_{ijklm} \tte^m 
\end{aligned}
\end{gather}
with all remaining brackets of basis elements vanishing. Here the indices $i,j,\dots$ run over $\{1,2,3,4,5\}$, and we employ the Einstein summation convention.
Here $\ttr^i{}_j$ span $\mf{gl}(5)$. The basis elements of the subalgebras $\mf{sl}(5)$ are
\begin{equation}
    \tilde\ttr^i{}_j\coloneqq\ttr^i{}_j - \frac15\delta^i_j\ttr^k{}_k.
\end{equation}

\bibliographystyle{unsrturl}
\bibliography{biblio}

\end{document}